\pdfoutput=1
\RequirePackage{ifpdf}
\ifpdf 
\documentclass[pdftex]{sigma}
\else
\documentclass{sigma}
\fi

\usepackage{bm, mathrsfs, mathtools}
\usepackage{bbold}

\numberwithin{equation}{section}

\newtheorem{Theorem}{Theorem}[section]

\newtheorem{Lemma}[Theorem]{Lemma}
\newtheorem{Proposition}[Theorem]{Proposition}
 { \theoremstyle{definition}

 }

\DeclareMathOperator{\Res}{Res}

\DeclareMathOperator{\Ad}{Ad}
\DeclareMathOperator{\ad}{ad}

\DeclareMathOperator{\Tr}{Tr}

\newcommand{\lda}{\lambda}

\newcommand{\g}{\mathfrak{g}}
\newcommand{\Lg}{\widetilde{\mathfrak{g}}}

\newcommand{\LG}{\widetilde{G}}

\newcommand{\CP}{\mathbb{P}^1}

\newcommand{\Lag}{\mathscr{L}}

\newcommand{\sgm}{\sigma}
\newcommand{\omg}{\omega}

\newcommand{\ti}[1]{_{\mathbf{\underline{#1}}}}

\newcommand{\longhookrightarrow}{\lhook\joinrel\relbar\joinrel\rightarrow}

\def\CC{\mathbb{C}}
\def\CP{\mathbb{C}P^1}

\def\RR{\mathbb{R}}

\def\ZZ{\mathbb{Z}}

\def\O{\mathcal{O}}
\def\P{\mathcal{P}}

\def\1{\bm{1}}

\begin{document}


\newcommand{\arXivNumber}{2405.12837}

\renewcommand{\PaperNumber}{100}

\FirstPageHeading

\ShortArticleName{Lagrangian Multiform for Cyclotomic Gaudin Models}

\ArticleName{Lagrangian Multiform for Cyclotomic Gaudin Models}

\Author{Vincent CAUDRELIER~$^{\rm a}$, Anup Anand SINGH~$^{\rm a}$ and Beno\^{\i}t VICEDO~$^{\rm b}$}

\AuthorNameForHeading{V.~Caudrelier, A.A.~Singh and B.~Vicedo}

\Address{$^{\rm a)}$~School of Mathematics, University of Leeds, Leeds LS2 9JT, UK}
\EmailD{\href{mailto:v.caudrelier@leeds.ac.uk}{v.caudrelier@leeds.ac.uk}, \href{mailto:anupanandsingh@gmail.com}{anupanandsingh@gmail.com}}

\Address{$^{\rm b)}$~Department of Mathematics, University of York, York YO10 5DD, UK}
\EmailD{\href{mailto:b.vicedo@york.ac.uk}{b.vicedo@york.ac.uk}}

\ArticleDates{Received May 28, 2024, in final form November 07, 2024; Published online November 15, 2024}

\Abstract{We construct a Lagrangian multiform for the class of cyclotomic (rational) Gaudin models by formulating its hierarchy within the Lie dialgebra framework of Semenov-Tian-Shansky and by using the framework of Lagrangian multiforms on coadjoint orbits. This provides the first example of a Lagrangian multiform for an integrable hierarchy whose classical $r$-matrix is non-skew-symmetric and spectral parameter-dependent. As an important by-product of the construction, we obtain a Lagrangian multiform for the {\it periodic} Toda chain by choosing an appropriate realisation of the cyclotomic Gaudin Lax matrix. This fills a gap in the landscape of Toda models as only the open and infinite chains had been previously cast into the Lagrangian multiform framework. A slightly different choice of realisation produces the so-called discrete self-trapping (DST) model. We demonstrate the versatility of the framework by {\it coupling} the periodic Toda chain with the DST model and by obtaining a Lagrangian multiform for the corresponding integrable hierarchy.}

\Keywords{Lagrangian multiforms; integrable systems; classical $r$-matrix; Gaudin models}

\Classification{17B80; 37J35; 70H06}

\section{Introduction}
A characteristic property of an integrable system is that its equations of motion can be seen as members of a hierarchy of compatible equations. Traditional Lagrangians fail to capture this notion of commuting (Hamiltonian) flows, or its discrete analog known as multidimensional consistency \cite{BS, N1}. This obstacle to describing integrable hierarchies variationally was overcome for the first time in \cite{LN} by introducing a new object -- a Lagrangian multiform -- together with a generalised variational principle applied to an appropriate generalisation of a classical action, and noticing a fundamental property of the multiform: the so-called closure relation. Since its introduction, this idea has been developed in several directions. Its connections with more traditional features of integrability (Lax pairs and Hamiltonian structures, for instance) have been established in the various realms of integrable systems: discrete and continuous finite-dimensional systems \cite{CDS,PS,Su,YKLN}, continuous infinite-dimensional systems -- field theories in $1+1$ dimensions \cite{CS1,CS2,CSV,PV,SNC1,SNC2,SuV} and in $2+1$ dimensions \cite{N2, SNC3} -- and semi-discrete systems \cite{SV}. The relations between discrete and continuous multiforms were explored in \cite{V}. The concept has even been extended to non-commuting flows in \cite{CNSV}.

In general, hierarchies of models in $d$ spacetime dimensions are described by a Lagrangian multiform which is a $d$-form integrated over a hypersurface of dimension $d$ in a so-called multi-time space of dimension greater than $d$ to yield an action functional depending not only on the field configurations but also on the hypersurface itself. This last point is the crucial new ingredient of the generalised variational principle used in Lagrangian multiform theory. One postulates a principle of least action which must be valid for {\it any} hypersurface embedded in the multi-time space. This captures the commutativity of the flows variationally and was adopted as a definition of pluri-Lagrangian systems, see \cite{PS,PV} and references therein. In Lagrangian multiform theory, there is an additional postulate which leads to the {\it closure relation}. The latter has been shown \cite{CDS, Su} to be equivalent to the Poisson involutivity of Hamiltonians, the Liouville criterion for integrability.

The generalised variational principle produces equations that come in two flavours: 1) Euler--Lagrange equations associated with each of the coefficients of the Lagrangian multiform which form a collection of Lagrangian densities; 2) Corner or structure equations on the Lagrangian coefficients themselves which select possible models and ensure the compatibility of the various equations of motion imposed on a common set of fields. This paper deals with a finite-dimensional system, so let us illustrate the ingredients for this case. The basic objects are a~Lagrangian 1-form
\begin{equation*}
 \Lag[q]=\sum_{k=1}^N \Lag_k[q] \, \mathrm{d}t_k
\end{equation*}
and the related generalised action
\begin{equation*}
 S[q,\Gamma]=\int_\Gamma \Lag[q],
\end{equation*}
where $\Gamma$ is a curve in the multi-time $\RR^N$ with (time) coordinates $t_1,\dots,t_N$, and $q$ denotes generic configuration coordinates. For instance, $q$ could be a position vector in $\RR^d$ for some $d$, or as will be the case for us, an element of a (matrix) Lie group. The notations $\Lag[q]$ and $\Lag_k[q]$ mean that these quantities depend on $q$ and a finite number of derivatives of $q$ with respect to the times $t_1,\dots,t_N$. In this paper, we restrict ourselves only to the case of first derivatives and simply write $\Lag_k$ for the Lagrangian coefficients.
The application of the generalised variational principle leads to the following multi-time Euler--Lagrange equations \cite{Su}:
\begin{align}
 \label{simple_multitime_EL1}
 &\frac{\partial \Lag_k}{\partial q}-\partial_{t_k} \frac{\partial \Lag_k}{\partial q_{t_k}}=0 ,\\[1ex]
 \label{simple_multitime_EL2}
 &\frac{\partial \Lag_k}{\partial q_{t_\ell}}=0 ,\qquad \ell\neq k ,\\[1ex]
 \label{simple_multitime_EL3}
 &\frac{\partial \Lag_k}{\partial q_{t_k}}=\frac{\partial \Lag_\ell}{\partial q_{t_\ell}} ,\qquad k,\ell=1,\dots,N .
\end{align}
Note that \eqref{simple_multitime_EL1} is simply the standard Euler--Lagrange equation for each $\Lag_k$. Equations \eqref{simple_multitime_EL2} and \eqref{simple_multitime_EL3} are the corner equations. It turns out that these corner equations will, in fact, be identically satisfied by our Lagrangian multiforms, but in general they can represent non-trivial equations for $q$.
The closure relation then stipulates that
\begin{equation*}
 \mathrm{d}\Lag[q]=0\quad\Leftrightarrow \quad\partial_{t_k}\Lag_j-\partial_{t_j}\Lag_k=0
\end{equation*}
on solutions of \eqref{simple_multitime_EL1}--\eqref{simple_multitime_EL3}.

Gaudin models are a general class of integrable systems associated with Lie algebras with a~nondegenerate invariant bilinear form. They were first introduced for the Lie algebra $\mathfrak{sl}(2, \mathbb{C})$ by M. Gaudin in \cite{G1} as quantum integrable spin chains with long-range interactions, and then generalised for arbitrary semi-simple Lie algebras in \cite{G2}.
At both the classical and quantum levels, the integrable structure of this model, associated with a Lie algebra $\g$, is underpinned by the rational skew-symmetric solution
\begin{equation}
\label{rational_r_mat}
 r_{\ti{12}}^0(\lda, \mu) = \frac{C_{\ti{12}}}{\mu - \lda}
\end{equation}
of the classical Yang--Baxter equation\footnote{Note that this is the general version of the CYBE applicable to non-skew-symmetric matrices.} (CYBE)
\begin{equation*}
 \bigl[r_{\ti{12}}(\lda, \mu),r_{\ti{13}}(\lda, \nu)\bigr]+\bigl[r_{\ti{12}}(\lda, \mu),r_{\ti{23}}( \mu,\nu)\bigr]+\bigl[r_{\ti{32}}(\nu, \mu),r_{\ti{13}}(\lda, \nu)\bigr]=0 .
\end{equation*}
The rational classical $r$-matrix \eqref{rational_r_mat} takes values in $\g \otimes \g$ and depends on spectral parameters $\lda, \mu \in \mathbb{C}$, and the tensor Casimir $C_{\ti{12}} \in \g \otimes \g$ appearing in \eqref{rational_r_mat} is defined as
\begin{equation*}
 C_{\ti{12}} = I_a \otimes I^a,
\end{equation*}
where $\{I_a\}$ and $\{I^a\}$ are dual bases of $\g$ with respect to a fixed nondegenerate invariant bilinear form. The main property of $C_{\ti{12}}$ which gives it its name is $\bigl[C_{\ti{12}} , X_{\ti 1}+X_{\ti 2}\bigr]=0$ for all $X\in\g$ and where we use the standard tensorial notation $X_{\ti 1} \equiv X \otimes \mathbb{1}$ and $X_{\ti 2} \equiv \mathbb{1} \otimes X$.

The cyclotomic Gaudin model of interest in this paper arises as a specialisation of a general procedure which can be traced back to reduction group ideas \cite{M}, first applied in the form of an averaging procedure \cite{RF} to the rational classical $r$-matrix \eqref{rational_r_mat} in order to produce the trigonometric and elliptic $r$-matrices. This was generalised in various ways, for instance, in~\mbox{\cite{Av1,Av2, AT}} in~the~context of Sklyanin's (linear) Poisson algebra
\begin{equation}
\label{Sklyanin_algebra}
 \{U_{\ti 1}(\lda) ,U_{\ti 2}(\mu)\}=\bigl[ r_{\ti{12}}^0(\lda, \mu),U_{\ti 1}(\lda)+U_{\ti 2}(\mu)\bigr],
\end{equation}
where, as above, the index denotes which factor in the tensor product $\g \otimes \g$ the $\g$-valued function~$U$ sits in, that is, $U_{\ti 1}(\lda) \equiv U(\lda) \otimes \mathbb{1}$ and $U_{\ti 2}(\mu) \equiv \mathbb{1} \otimes U(\mu)$.

The idea is to use a Lie algebra automorphism appropriately extended to a loop algebra automorphism. For our purpose, we use an automorphism $\sigma$ of order $T$ on $\g$ and define%
 \begin{equation*}
 \phi \colon \ U(\lda)\mapsto \omega \sigma^{-1}(U(\omega \lda)),
 \end{equation*}
 where $\omega$ is a $T$-th root of unity. Thus, $\phi$ is an automorphism of the Poisson algebra \eqref{Sklyanin_algebra}. This leads us to consider the fixed point subalgebra generated by
 \begin{equation}
 \label{def_averaged_L}
 L(\lda)=\frac{1}{T} \sum_{k=0}^{T-1} \omega^{-k}\sgm^k U\bigl(\omega^{-k}\lda\bigr) , \qquad \sigma(L(\lda))=\omega L(\omega \lda) .
 \end{equation}
The Poisson algebra for $L(\lda)$ closes into
 \begin{equation}
 \label{PB_L}
 \bigl\{L_{\ti 1}(\lda) ,L_{\ti 2}(\mu)\bigr\}=\bigl[ r_{\ti{12}}(\lda, \mu) ,L_{\ti 1}(\lda)\bigr]-\bigl[ r_{\ti{21}}(\mu,\lda) ,L_{\ti 2}(\mu)\bigr],
\end{equation}
where
\begin{equation}
\label{eq:nss-rmat}
 r_{\ti{12}}(\lda, \mu) = \frac{1}{T} \sum_{k=0}^{T-1} \frac{\sgm_{\ti{1}}^k C_{\ti{12}}}{\mu - \omg^{-k}\lda}
\end{equation}
is a non-skew-symmetric solution of CYBE. These facts are special cases of \cite[Propositions~2.2 and~4.1]{CC}. It is well known \cite{BV} that the Poisson algebra~\eqref{PB_L} ensures that the quantities ${\rm Tr} L(\lda)^p$ Poisson commute and the equations of motion generated by them with respect to \eqref{PB_L} take the Lax form. More precisely, the generating Hamiltonian
\begin{equation*}
 H(\mu)=\frac{1}{p+1} \Tr L(\mu)^{p+1} , \qquad p\in \ZZ_{\ge 0} ,
\end{equation*}
satisfies
\begin{equation*}
 \{L(\lda) ,H(\mu)\} = [M(\lda,\mu) ,L(\lda)],
\end{equation*}
where by definition $\{L(\lda) ,H(\mu)\}=\{L^a(\lda) ,H(\mu)\}I_a$, and
\begin{equation*}
 M_{\ti 1}(\lda,\mu)= \Tr_{\ti 2}r_{\ti {12}}(\lda,\mu)L_{\ti 2}(\mu)^p .
\end{equation*}
This yields an infinite number of Hamiltonians by specifying $p$ and extracting coefficients at the poles of $L(\mu)$. Of course, only a finite number of these Hamiltonians are independent when acting on a finite-dimensional phase space. They generate mutually compatible flows forming an integrable hierarchy.

If one applies this construction to a Lax matrix of rational Gaudin model type, with poles in the finite set $D = \{0, \zeta_1, \dots, \zeta_N, \infty\} \subset \mathbb{C}P^1$,
\begin{equation}
\label{Lax_matrix_Gaudin}
 U(\lda)=\sum_{n=0}^{N_0 - 1} \frac{U_0^{(n)}}{\lda^{n+1}} + \sum_{r=1}^{N} \sum_{n=0}^{N_{r}-1} \frac{U_{r}^{(n)}}{(\lda - \zeta_r)^{n+1}} + \sum_{n=0}^{N_\infty} U_\infty^{(n)} \lda^n , \qquad U_{s}^{(n)}\in \g ,
\end{equation}
then one obtains the so-called cyclotomic Gaudin model \cite{S, VY1, VY2} and its associated integrable hierarchy. Note that in \eqref{Lax_matrix_Gaudin}, we have anticipated that the pole at $0$ and at infinity behave differently in \eqref{def_averaged_L} compared to the poles at $\lda=\zeta_r\neq 0,\infty$. Although we will present the general algebraic setup, for simplicity, in our examples we will only consider the case where~$U(\lda)$ has simple poles at all $\zeta_r$, $r \in \{1, \dots, N\}$, and double poles at the origin and at
infinity. This will be sufficient for our application to the periodic Toda chain and the discrete self-trapping (DST) model. The corresponding Lax matrix of the cyclotomic Gaudin model then takes the form
\begin{equation}\label{eq:cglax}
 L(\lda) = \frac{X_{0}^{(0)}}{\lda} + \frac{X_{0}^{(1)}}{\lda^2} + \frac{1}{T} \sum_{r=1}^N \sum_{k=0}^{T - 1} \frac{\sgm^{k} X_{r}}{\lda - \omg^k \zeta_r} + X_{\infty}
\end{equation}
with $\sgm X_{0}^{(j)}=\omega^{-j} X_{0}^{(j)}$ and $\sgm X_{\infty}=\omega X_{\infty}$.

Our goal is to give a Lagrangian multiform description of the cyclotomic Gaudin hierarchy. In a nutshell, what is required is to interpret $L(\lda)$ as an element of an appropriate coadjoint orbit so that we can write our Lagrangian multiform following the construction introduced in~\cite{CDS}. The latter is based on using the Lie dialgebra formalism \cite{STS} which is reviewed in Section~\ref{review_Lag_multi}. In Section~\ref{dialgebra_Gaudin}, we show how to construct the required algebraic ingredients for the cyclotomic Gaudin model. This is applied in Section~\ref{sec:cgmultiform}, where we obtain a Lagrangian multiform for the cyclotomic Gaudin hierarchy and discuss its properties.

Section~\ref{sec:realisations} deals with two special realisations of the cyclotomic Gaudin model: the periodic Toda chain (Section~\ref{sec:toda}) and the integrable case of the DST model (Section~\ref{sec:dst}). We derive Lagrangian multiforms for the hierarchies corresponding to these two theories using our results from Section~\ref{dialgebra_Gaudin}. The construction of the periodic Toda multiform, in particular, complements the results of \cite{CDS} where a Lagrangian multiform for the open Toda chain was derived, while in~\cite{SV} the infinite Toda chain was considered within a semi-discrete Lagrangian multiform. Finally, in Section~\ref{sec:todadst}, we illustrate how to couple the periodic Toda and DST hierarchies together in a~straightforward manner and derive a Lagrangian multiform for this coupled hierarchy. We end with concluding remarks in Section~\ref{sec:conclusion}.

\section{Lagrangian multiforms on coadjoint orbits}\label{review_Lag_multi}

In this section, we briefly review the results of \cite{CDS} where a Lagrangian multiform was obtained which produces a hierarchy of equations of motion in Lax form as its multi-time Euler--Lagrange equations. We will, in fact, be interested in applying this framework in the infinite-dimensional setting for which we refer more precisely to \cite[Section~7]{CDS}. In order to avoid confusion with the finite-dimensional Lie algebra $\g$ used throughout the rest of this paper, in this section we denote by $\mathbb{g}$ the infinite-dimensional Lie algebra of interest. The construction of the specific Lie algebra $\mathbb{g}$ relevant for the cyclotomic Gaudin model will be the subject of the next section.

The framework of Lie dialgebras for constructing hierarchies of compatible Lax equations involves a number of ingredients. For the purpose of this paper, we will actually only require a particular instance of this general framework corresponding to the so-called Adler--Kostant--Symes scheme \cite{A, K, Sy}, so we will mainly focus on this case to ease the comparison with later sections. The ingredients are:
\begin{enumerate}\itemsep=0pt
 \item 
 A Lie algebra $\mathbb{g}$, with Lie group $\mathbb{G}$, which we assume is equipped with a nondegenerate invariant bilinear pairing
\begin{equation} \label{pairing V g}
 \langle \cdot, \cdot \rangle \colon \ V \times \mathbb{g} \longrightarrow \CC
\end{equation}
 with some representation $V$ of $\mathbb{G}$, allowing one to use $V$ as a model for the dual space $\mathbb{g}^\ast$ of $\mathbb{g}$. We will denote the representation of $\mathbb{G}$ by $\Ad^\ast \colon \mathbb{G} \times V \to V$, $(\varphi, \xi) \mapsto \Ad^\ast_\varphi \xi$ and the corresponding Lie algebra representation by $\ad^\ast \colon \mathbb{g} \times V \to V$, $(X, \xi) \mapsto \ad^\ast_X \xi$, and refer to these as the coadjoint representations.

 Note that if $\mathbb{g}$ were finite-dimensional then a good choice of representation space $V$ would be the algebraic dual $\mathbb{g}'$ of $\mathbb{g}$, in which case \eqref{pairing V g} is given by the canonical pairing. In the infinite-dimensional setting we are considering, the algebraic dual is too big. When $\mathbb{g}$ is equipped with a nondegenerate invariant symmetric bilinear form $(\cdot, \cdot) \colon \mathbb{g} \times \mathbb{g} \to \CC$ then one possible replacement for the algebraic dual is afforded by the smooth dual $(\mathbb{g}, \cdot) \subset \mathbb{g}'$ which is canonically isomorphic to $\mathbb{g}$ itself. However, in our present setting it turns out that the smooth dual will not be the appropriate notion of dual space, which is why we need the above more general working definition for the dual space $\mathbb{g}^\ast$.

 \item A linear map $R \colon \mathbb{g}\to\mathbb{g}$ satisfying the {\it modified} classical Yang--Baxter equation (mCYBE)
\begin{equation}
\label{mCYBE}
[R(X),R(Y)]-R\left([R(X),Y]+[X,R(Y)]\right)=-[X,Y] , \qquad \forall X,Y\in\mathbb{g} .
\end{equation}
This is the crucial ingredient providing us with the Lie dialgebra $(\mathbb{g},\mathbb{g}_R)$ where $\mathbb{g}_R$ is the vector space $\mathbb{g}$ equipped with a second Lie bracket $[X,Y]_R=\frac{1}{2}([R(X),Y]+[X,R(Y)])$. In the Adler--Kostant--Symes scheme, this is determined by a direct sum decomposition\footnote{Throughout the text, we use $\dotplus$ to denote a direct sum of vector spaces, and $\oplus$ to denote a direct sum of Lie algebras. This is a convenient convention borrowed from \cite{STS}.}
\begin{equation} \label{decomposition g}
\mathbb{g} = \mathbb{g}_+ \dotplus \mathbb{g}_-
\end{equation}
into complementary subalgebras $\mathbb{g}_\pm \subset \mathbb{g}$. A solution to \eqref{mCYBE} is then given by $R = P_+ - P_-$, where $P_\pm \colon \mathbb{g} \to \mathbb{g}_\pm$ denote the projections onto these respective subalgebras. In this case, the Lie algebra $\mathbb{g}_R$ is the direct sum of Lie algebras $\mathbb{g}_R = \mathbb{g}_+ \oplus \mathbb{g}_-$.
Now \eqref{decomposition g} induces a~corresponding direct sum decomposition of the model $V$ for the dual space $\mathbb{g}^\ast$, namely
\begin{equation} \label{decomposition V}
V = V_+ \dotplus V_-,
\end{equation}
where $V_\pm$ is the orthogonal complement of $\mathbb{g}_\pm$ with respect to the pairing \eqref{pairing V g}. We will denote by $\mathcal P_\pm \colon V \to V_\pm$ the projections onto these respective subspaces.

Let $\mathbb{G}_\pm$ denote the Lie subgroups of $\mathbb{G}$ associated with $\mathbb{g}_\pm$, and $\mathbb{G}_R$ denote the Lie group associated with $\mathbb{g}_R$, so that (at least locally) $\mathbb{G}_R \simeq \mathbb{G}_+ \times \mathbb{G}_-$. It will be useful to note for later that the coadjoint representation of $\mathbb{G}_R$ on $V \equiv \mathbb{g}^
\ast$ can be expressed in terms of that of the subgroups $\mathbb{G}_\pm \subset \mathbb{G}$ as
\begin{equation} \label{AdR}
\Ad^{R\ast}_\varphi \xi=\mathcal{P}_-\bigl(\Ad^\ast_{\varphi_+} \xi\bigr)+\mathcal{P}_+\bigl(\Ad^\ast_{\varphi_-}\xi\bigr)
\end{equation}
for any $\varphi = (\varphi_+, \varphi_-) \in \mathbb{G}_R$ and $\xi\in\mathbb{g}^\ast$.

\item A collection of independent $\Ad^\ast$-invariant functions on $\mathbb{g}^\ast$, say $H_k$, for $k=1,\dots,N$. It follows from their $\Ad^\ast$-invariance that these functions are in involution with respect to the Lie--Poisson bracket on $\mathbb{g}^\ast$ given by
 \begin{equation*}
 \{f,g\}_R(\xi)= \langle \xi,[\nabla f(\xi) , \nabla g(\xi)]_R \rangle , \qquad f,g\in C^\infty(\mathbb{g}^\ast),
\end{equation*}
and generate a hierarchy of compatible flows on the Poisson manifold $( \mathbb{g}^\ast, \{ \,,\, \}_R )$ which take the form of Lax equations
\begin{equation}
\label{hierarchy_Lax}
 \partial_{t_k}L := \{L,H_k\}_R = \ad_{M_k(L)}^\ast L ,\qquad M_k(L)=\frac{1}{2}R \nabla H_k(L) .
\end{equation}
When $\mathbb{g}$ and $V$ are subspaces of a common ambient Lie algebra and the coadjoint action $\ad^\ast \colon \mathbb{g} \times V \to V$ is given by a Lie bracket in this ambient Lie algebra, as will be the case for us in Section~\ref{dialgebra_Gaudin} below, the Lax equations \eqref{hierarchy_Lax} take the more familiar form
\begin{equation*}
 \partial_{t_k}L = [M_k(L), L] .
\end{equation*}
This is the content of Semenov-Tian-Shansky's theorem, see, for instance, \cite{STS}.
\end{enumerate}

The symplectic leaves of the Poisson manifold $\bigl( \mathbb{g}^\ast, \{ \,,\, \}_R \bigr)$ are given by the coadjoint orbits
\begin{equation} \label{eq:coadj_orbit}
 {\cal O}_\Lambda=\bigl\{\Ad^{R\ast}_{\varphi} \Lambda \mid \varphi\in \mathbb{G}_R\bigr\} \qquad \text{for}\quad \Lambda \in \mathbb{g}^\ast ,
\end{equation}
which therefore provide natural phase spaces for integrable systems.
The key idea of \cite{CDS} was to define a Lagrangian multiform on such a coadjoint orbit which produces the hierarchy \eqref{hierarchy_Lax} as its Euler--Lagrange equations. Specifically, let $L = \Ad^{R\ast}_{\varphi} \Lambda$ and define
\begin{equation}
\label{our_Lag}
 \Lag[\varphi] = \sum_{k=1}^N \Lag_k \, \mathrm{d}t_k , \qquad \Lag_k= \bigl\langle L ,\,\partial_{t_k}\varphi \cdot_R \varphi^{-1} \bigr\rangle - H_k(L) ,
\end{equation}
where $\cdot_R$ denotes the multiplication in $\mathbb{G}_R$.
Then, we have the following.

\begin{Theorem}[\cite{CDS}]
\label{Th_multi_EL}
The Lagrangian $1$-form \eqref{our_Lag} satisfies the corner equations \eqref{simple_multitime_EL2}--\eqref{simple_multitime_EL3} of the multi-time Euler--Lagrange equations. The standard Euler--Lagrange equations \eqref{simple_multitime_EL1} associated with the Lagrangian coefficients $\Lag_k$ give the hierarchy of compatible Lax equations \eqref{hierarchy_Lax}%
 \begin{equation}
\label{EL_Lax}
 \partial_{t_k}L= \ad^\ast_{M_k(L)}L ,\qquad k=1,\dots,N .
\end{equation}
The closure relation\footnote{The closure relation for the Lagrangian one-form $\Lag$ is, in fact, equivalent to the involutivity of the Hamiltonians $H_k$ with respect to the Lie--Poisson $R$-bracket $\{\, ,\, \}_R$. We do not discuss this result here and refer the interested reader to \cite[Corollary 3.4]{CDS}.} holds: on solutions of \eqref{EL_Lax} we have
\[
 \partial_{t_k}\Lag_j-\partial_{t_j}\Lag_k=0 ,\qquad j,k=1,\dots,N .
\]
\end{Theorem}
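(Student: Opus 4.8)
The plan is to work with the right-trivialised variation $\epsilon := \delta\varphi \cdot_R \varphi^{-1} \in \mathbb{g}_R$ and the $R$-velocities $\eta_k := \partial_{t_k}\varphi \cdot_R \varphi^{-1} \in \mathbb{g}_R$, so that $\Lag_k = \langle L, \eta_k\rangle - H_k(L)$ with $L = \Ad^{R\ast}_\varphi\Lambda$. First I would record four identities valid off-shell: (i) $\delta L = \ad^{R\ast}_\epsilon L$ and $\partial_{t_k}L = \ad^{R\ast}_{\eta_k}L$, by differentiating $L = \Ad^{R\ast}_\varphi\Lambda$; (ii) $\delta\eta_k = \partial_{t_k}\epsilon + [\epsilon,\eta_k]_R$, the standard variation of a right logarithmic derivative; and (iii) the Maurer--Cartan relation $\partial_{t_k}\eta_j - \partial_{t_j}\eta_k = [\eta_k,\eta_j]_R$. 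Throughout I use the defining property $\langle \ad^{R\ast}_X\xi, Y\rangle = -\langle\xi,[X,Y]_R\rangle$ of the $R$-coadjoint action, together with the $\Ad^\ast$-invariance of each $H_k$ in the form $\ad^\ast_{\nabla H_k(L)}L = 0$, equivalently $\langle L,[\nabla H_k(L),\,\cdot\,]\rangle = 0$.

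\textbf{Euler--Lagrange and corner equations.} Substituting (i) and (ii) into $\delta\Lag_k$ gives
\begin{equation*}
\delta\Lag_k = \langle\ad^{R\ast}_\epsilon L,\eta_k\rangle + \langle L,\partial_{t_k}\epsilon + [\epsilon,\eta_k]_R\rangle - \langle\ad^{R\ast}_\epsilon L,\nabla H_k(L)\rangle .
\end{equation*}
The two $\eta_k$-dependent terms cancel by the coadjoint identity $\langle\ad^{R\ast}_\epsilon L,\eta_k\rangle = -\langle L,[\epsilon,\eta_k]_R\rangle$, leaving $\delta\Lag_k = \langle L,\partial_{t_k}\epsilon\rangle - \langle\ad^{R\ast}_\epsilon L,\nabla H_k(L)\rangle$. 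Rewriting the last term as $\langle\ad^{R\ast}_{\nabla H_k(L)}L,\epsilon\rangle = \langle\ad^\ast_{M_k(L)}L,\epsilon\rangle$ — the final step being the Semenov-Tian-Shansky identity, itself a consequence of $\Ad^\ast$-invariance — yields $\delta\Lag_k = \langle L,\partial_{t_k}\epsilon\rangle + \langle\ad^\ast_{M_k(L)}L,\epsilon\rangle$. Everything is read off this single formula. The momentum conjugate to the $t_\ell$-velocity is the coefficient of $\partial_{t_\ell}\epsilon$, which is $L$ when $\ell = k$ and $0$ otherwise; hence the corner equations \eqref{simple_multitime_EL2} hold because $\Lag_k$ is independent of $\partial_{t_\ell}\varphi$ for $\ell\neq k$, while \eqref{simple_multitime_EL3} holds because the surviving momentum equals $L$ for every $k$. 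Integrating the $\partial_{t_k}\epsilon$-term by parts and discarding the boundary term leaves the bulk equation $\partial_{t_k}L = \ad^\ast_{M_k(L)}L$, which is precisely the Lax hierarchy \eqref{EL_Lax}.

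\textbf{Closure.} Writing $\partial_{t_k}\Lag_j - \partial_{t_j}\Lag_k = A - B$ with $A := \partial_{t_k}\langle L,\eta_j\rangle - \partial_{t_j}\langle L,\eta_k\rangle$ and $B := \partial_{t_k}H_j - \partial_{t_j}H_k$, I would treat the two pieces separately. For $A$, the Leibniz rule together with the kinematic evolution $\partial_{t_k}L = \ad^{R\ast}_{\eta_k}L$ and the Maurer--Cartan relation (iii) collapses everything to $A = -\langle L,[\eta_k,\eta_j]_R\rangle$, and this holds off-shell. For $B$, on solutions $\partial_{t_k}H_j = \langle\ad^{R\ast}_{\eta_k}L,\nabla H_j\rangle = \{H_j,H_k\}_R(L) = 0$ by involutivity, so $B$ vanishes. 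It then remains to show $A$ vanishes on solutions. Here I use that the Euler--Lagrange equations force the $R$-velocity to generate the Hamiltonian flow, i.e.\ $\ad^{R\ast}_{\eta_k}L = \ad^\ast_{M_k(L)}L = \ad^{R\ast}_{\nabla H_k(L)}L$ on solutions. Inserting this once into $\langle L,[\eta_k,\eta_j]_R\rangle = -\langle\ad^{R\ast}_{\eta_k}L,\eta_j\rangle$, then applying it again to the remaining $\eta_j$-factor, converts the velocity bracket into $\{H_k,H_j\}_R(L)$, which is zero by involutivity. Hence $A = 0$ on solutions and the closure relation follows; this exhibits closure as exactly equivalent to the involutivity of the $H_k$, as the footnote indicates.

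\textbf{Main obstacle.} The variational computation is routine; the genuine work is the closure relation, and specifically the on-shell identification $\ad^{R\ast}_{\eta_k}L = \ad^{R\ast}_{\nabla H_k(L)}L$ of the kinematic $R$-velocity with the Hamiltonian gradient modulo the stabiliser of $L$, together with the careful bookkeeping of the two coadjoint actions $\ad^{R\ast}$ and $\ad^\ast$ (and the factor $\tfrac12$ in $M_k = \tfrac12 R\nabla H_k$) needed to reduce $\langle L,[\eta_k,\eta_j]_R\rangle$ cleanly to $\{H_k,H_j\}_R(L)$. Once that reduction is secured, $\Ad^\ast$-invariance and involutivity close the argument.
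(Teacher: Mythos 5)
Your proof is correct, and it takes essentially the approach of the source: this paper does not reprove Theorem~\ref{Th_multi_EL} but quotes it from \cite{CDS}, whose argument is precisely the variational computation you give --- trivialised variations $\epsilon = \delta\varphi \cdot_R \varphi^{-1}$, cancellation of the $\eta_k$-terms via the $R$-coadjoint identity, the Semenov-Tian-Shansky relation $\ad^{R\ast}_{\nabla H_k(L)}L = \ad^\ast_{M_k(L)}L$ coming from $\Ad^\ast$-invariance, and closure obtained by identifying the kinematic flow $\ad^{R\ast}_{\eta_k}L$ with the Hamiltonian one on shell and invoking involutivity (the content of \cite[Corollary~3.4]{CDS} referenced in the footnote). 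I find no gaps.
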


In short, the machinery of \cite{CDS} provides a systematic way of constructing a Lagrangian multiform for a hierarchy of Lax equations \eqref{hierarchy_Lax} on a coadjoint orbit \eqref{eq:coadj_orbit}. To apply this result to the cyclotomic Gaudin model we must identify its Lax matrix \eqref{eq:cglax} as an element of an~appropriate coadjoint orbit ${\cal O}_{\Lambda} \subset \mathbb{g}^\ast$, as in \eqref{eq:coadj_orbit}, for a suitable Lie algebra $\mathbb{g}$ and $R$-matrix. In~particular, we will need to build a suitable model $V$ for the dual space $\mathbb{g}^\ast$ via a corresponding pairing $\langle \cdot, \cdot \rangle \colon V \times \mathbb{g} \to \mathbb{C}$. This will be the central goal of the next section.

\section{Cyclotomic Gaudin model}\label{dialgebra_Gaudin}

The purpose of this section is to formulate a Lagrangian multiform for the cyclotomic Gaudin model using the machinery developed in \cite{CDS}.
As outlined at the end of Section~\ref{review_Lag_multi}, the first immediate task is to introduce the relevant infinite-dimensional Lie algebra $\mathbb{g}$ and a suitable model $V$ for its dual space $\mathbb{g}^\ast$ by constructing a nondegenerate invariant bilinear pairing as in~\eqref{pairing V g}. This will be achieved in Section~\ref{sec:Lie dialgebra Gaudin}, which culminates in a description of the relevant coadjoint orbits $\mathcal O_\Lambda \subset \mathbb{g}^\ast$ in Lemma \ref{identification_orbits}. In Section~\ref{sec:cgmlaxm}, we specialise to the case of simple poles at $\omg^k \zeta_r$, $k \in \{0, \dots, T-1\}$, $r \in \{1, \dots, N\}$, and double poles at the origin and at infinity, and then describe the Lax matrix \eqref{eq:cglax} of the cyclotomic Gaudin model as an element of this coadjoint orbit. In Section~\ref{sec:cgmlaxeq}, we then describe the Lax equations as flows on this coadjoint orbit, as in \eqref{hierarchy_Lax}. Finally, in Section~\ref{sec:cgmultiform}, we put everything together to obtain the Lagrangian multiform for the cyclotomic Gaudin model.

\subsection{Lie dialgebra framework}\label{sec:Lie dialgebra Gaudin}

Let us fix $T \in \mathbb{Z}_{\geq 1}$ and pick a primitive $T$-th root of unity $\omg \in \mathbb{C}^{\times}$. We can then define
\begin{equation*}
 \Gamma \coloneqq \bigl\{ 1, \omg, \omg^2, \dots, \omg^{T-1} \bigr\} ,
\end{equation*}
a copy of the cyclic group $\mathbb{Z}_T$ of order $T$ that acts on $\mathbb{C}$ by multiplication. Further, we introduce the finite set of points $D = \{0, \zeta_1, \dots, \zeta_N, \infty\} \subset \mathbb{C}P^1$ including the point at infinity such that the $\Gamma$-orbits of the points $\zeta_1, \dots, \zeta_N$ are pairwise disjoint, that is,
\begin{equation*}
 \Gamma \zeta_r \cap \Gamma \zeta_s = \varnothing \qquad \text{for all} \quad 1 \leq r \neq s \leq N .
\end{equation*}
Note that unlike the non-zero finite points $\zeta_1, \dots, \zeta_N$, the origin and the point at infinity are fixed under the action of $\Gamma$.

Let $\g$ be a finite-dimensional Lie algebra over $\mathbb{C}$ with an automorphism $\sgm$ of order $T$. For simplicity, we will only consider matrix Lie algebras in this work, with the Lie bracket being the commutator, and a nondegenerate invariant bilinear pairing given by the trace. We also assume that the automorphism $\sgm$ preserves the trace, that is, $\Tr(\sgm (x) \sgm (y)) = \Tr(xy)$ for any $x, y \in \g$. The eigenspaces of $\sgm$,
\begin{equation}\label{eq:sgmeigenspace}
 \g^{(k)} = \bigl\{X \in \g \mid \sgm(X) = \omg^k X \bigr\}, \qquad k \in \{0, \dots, T-1\} ,
\end{equation}
form a $\mathbb{Z}_T$-gradation of $\g$, namely
\begin{equation}
\label{gradation}
 \g = \g^{(0)} \dotplus \dots \dotplus \g^{(T-1)} \qquad \text{with} \quad \bigl[\g^{(k_1)}, \g^{(k_2)}\bigr] = \g^{(k_1 + k_2\, \text{mod}\, T)} .
\end{equation}
From now on, we will simply write $\g^{(n)}$ to mean $\g^{(n\, \text{mod}\, T)}$. Further, let us define the local parameters
\begin{equation} \label{lambda r def}
 \lda_0 = \lda , \qquad \lda_r = \lda - \zeta_r \quad \text{for}\ r \in \{1, \dots, N\} , \qquad \lda_\infty = \dfrac{1}{\lda} .
\end{equation}
It will be convenient to also introduce an additional set of local parameters $\lda_{r, k} = \lda - \omg^k \zeta_r$ for all $k \in \{0, \dots, T-1\}$ and $r \in \{1, \dots, N\}$. We also introduce an index set $S = \{0, 1, \dots, N, \infty\}$ where $0$ and $\infty$ denote labels for indices rather than points in $\mathbb{C}P^1$.

Let us denote by $\mathcal{F}_{D^{\prime}}$ the algebra of rational functions in the formal variable $\lda$ with values in $\g$ that are regular outside $D^{\prime} = \bigl\{ \omg^k \zeta_r \mid k \in \{0, \dots, T-1\},\, r \in \{1, \dots, N\}\bigr\} \cup \{0, \infty\}$. Two subspaces of this algebra will be of relevance here: the subspace $\mathcal{F}_{D^\prime}^\Gamma$ of \emph{equivariant functions} and the subspace $\Omega_{D^\prime}^\Gamma$ of \emph{equivariant one-forms} that we define as
\begin{gather*}
 \mathcal{F}_{D^\prime}^{\Gamma} \coloneqq \{ f \in \mathcal{F}_{D^\prime} \mid \sgm (f (\lda)) = f (\omg \lda) \} ,\qquad
 \Omega_{D^\prime}^{\Gamma} \coloneqq \{ g \in \mathcal{F}_{D^\prime} \mid \sgm (g (\lda)) = \omg g (\omg \lda) \} .
 \end{gather*}
In general, equivariant functions and equivariant one-forms take the forms
\begin{gather*}
 f = \sum_{n=0}^{M_0 - 1} \frac{X_0^{(n)}}{\lda^{n+1}} + \sum_{r=1}^{N} \sum_{k=0}^{T-1} \sum_{n=0}^{M_r - 1} \frac{\sgm^k X_r^{(n)}}{\bigl(\omg^{-k}\lda - \zeta_r\bigr)^{n+1}} + \sum_{n=0}^{M_\infty} \frac{X_\infty^{(n)}}{\lda_\infty^n} ,\\
 g = \sum_{n=0}^{N_0 - 1} \frac{Y_0^{(n)}}{\lda^{n+1}} + \sum_{r=1}^N \sum_{k=0}^{T-1} \sum_{n=0}^{N_r - 1} \frac{\omg^{-k} \sgm^k Y_r^{(n)}}{\bigl(\omg^{-k} \lda - \zeta_r\bigr)^{n+1}} + \sum_{n=0}^{N_\infty} \frac{Y_\infty^{(n)}} {\lda_\infty^n} ,
 \end{gather*}
respectively, with
\begin{alignat}{4}
 &X_0^{(n)} \in \g^{(-n-1)},\qquad&& X_r^{(n)} \in \g \quad \text{for}\ r \in \{1, \dots, N\},\qquad&& X_\infty^{(n)} \in \g^{(n)} , &\nonumber\\
 &Y_0^{(n)} \in \g^{(-n)},\qquad&& Y_r^{(n)} \in \g \quad \text{for}\ r \in \{1, \dots, N\}, \qquad&& Y_\infty^{(n)} \in \g^{(n+1)} .&\label{eq:coeffcond}
\end{alignat}

One can check that the subspace $\mathcal{F}_{D^\prime}^\Gamma$ is, in fact, a Lie subalgebra of $\mathcal{F}_{D^\prime}$. The subspace $\Omega_{D^\prime}^\Gamma$, on the other hand, contains the Lax matrix \eqref{eq:cglax} of the model. To construct the cyclotomic Gaudin multiform \`{a} la \cite{CDS}, we will realise this Lax matrix as an element of a certain coadjoint orbit. To do so, let us define the loop algebra of formal Laurent series in the local parameter $\lda_r$ with coefficients in $\g$, for each $r \in S$,
\begin{equation} \label{Lgr def}
 \Lg_r = \g \otimes \mathbb{C}((\lda_r)) ,
\end{equation}
with Lie bracket
\begin{equation} \label{Lie bracket Lg}
 \bigl[X\lda_r^i, Y\lda_r^j\bigr] = [X, Y]\lda_r^{i+j}, \qquad X, Y \in \g ,
\end{equation}
and then consider the direct sum
\begin{equation*}
 \Lg_D = \bigoplus_{r \in S} \Lg_r .
\end{equation*}

The Lie bracket of two elements $X, Y \in \Lg_D$ is defined component-wise
\begin{equation*}
 [X,Y]_r(\lda_r)= [X_r(\lda_r), Y_r(\lda_r)] .
\end{equation*}
Of interest to us here are two particular subspaces of $\Lg_D$,
\begin{equation*}
 \Lg_{D}^{(0)} = \Lg_0^{\Gamma, 0} \oplus \bigoplus_{r=1}^N \Lg_r \oplus \Lg_\infty^{\Gamma, 0} \qquad \text{and} \qquad \Lg_{D}^{(1)} = \Lg_0^{\Gamma, 1} \oplus \bigoplus_{r=1}^N \Lg_r \oplus \Lg_\infty^{\Gamma, 1} ,
\end{equation*}
where the twisted spaces attached to the origin and infinity are defined as
\begin{align}
 &\Lg_0^{\Gamma, k} = \bigl\{X_0(\lda_0) \in \Lg_0 \mid \sgm (X_0(\lda_0)) = \omg^k X_0(\omg \lda_0)\bigr\} ,\nonumber\\
 &\Lg_\infty^{\Gamma, k} = \bigl\{X_\infty(\lda_\infty) \in \Lg_\infty \mid \sgm (X_\infty(\lda_\infty)) = \omg^k X_\infty\bigl(\omg^{-1} \lda_\infty\bigr)\bigr\} \label{Lg Gamma spaces}
 \end{align}
for $k = 0, 1$. Elements of \smash{$\Lg_{D}^{(0)}$} and \smash{$\Lg_{D}^{(1)}$} are tuples of the form
\begin{equation} \label{X g0D explicit}
 X = \left(\sum_{n=-\infty}^{M_0 - 1} \frac{X_0^{(n)}}{\lda^{n+1}}, \sum_{n=-\infty}^{M_1 - 1} \frac{X_1^{(n)}}{(\lda - \zeta_1)^{n+1}}, \dots, \sum_{n=-\infty}^{M_N - 1} \frac{X_N^{(n)}}{(\lda - \zeta_N)^{n+1}}, \sum_{n=-\infty}^{M_\infty} \frac{X_\infty^{(n)}}{\lda_\infty^n} \right) ,
\end{equation}
and
\begin{equation} \label{Y g1D explicit}
 Y = \left(\sum_{n=-\infty}^{N_0 - 1} \frac{Y_0^{(n)}}{\lda^{n+1}}, \sum_{n=-\infty}^{N_1 - 1} \frac{Y_1^{(n)}}{(\lda - \zeta_1)^{n+1}}, \dots, \sum_{n=-\infty}^{N_N - 1} \frac{Y_N^{(n)}}{(\lda - \zeta_N)^{n+1}}, \sum_{n=-\infty}^{N_\infty} \frac{Y_\infty^{(n)}}{\lda_\infty^n} \right) ,
 \end{equation}
 respectively, where the coefficients $X_r^{(n)}$ and $Y_r^{(n)}$, $r \in S$, satisfy the conditions in \eqref{eq:coeffcond}. Note that each coefficient in \eqref{X g0D explicit} and \eqref{Y g1D explicit} is a $\g$-valued Laurent series, i.e., an element of \eqref{Lgr def} for~${r \in S}$. This is clear from changing variables in each sum from $n$ to $-n$. The reason for the unusual choice of range for the indices $n$ in each of the above sums is to make the expressions for the maps in \eqref{pi01 def} defined below slightly more transparent.

 The subspace $\Lg_{D}^{(0)}$ is a Lie subalgebra of $\Lg_D$ which defines the desired infinite-dimensional~Lie algebra \smash{$\mathbb{g} = \Lg_D^{(0)}$} in the notation of Section~\ref{review_Lag_multi}. Continuing the identification of the different ingredients from the Lie dialgebra framework listed in Section~\ref{review_Lag_multi}, notice that \smash{$V = \Lg_D^{(1)}$} is clearly a~representation of the Lie algebra \smash{$\mathbb{g} = \Lg_D^{(0)}$} since we have \smash{$\bigl[\Lg_r^{\Gamma, 0}, \Lg_r^{\Gamma, 1}\bigr] \subset \Lg_r^{\Gamma, 1}$} for $r \in \{ 0, \infty \}$. In the notation of Section~\ref{review_Lag_multi}, the representation $\ad^\ast \colon \mathbb{g} \times V \to V$ is given explicitly by $(X, Y) \mapsto [X, Y]$. The next proposition identifies \smash{$V = \Lg_D^{(1)}$} as a suitable model for the dual space $\mathbb{g}^\ast$.

\begin{Proposition}\label{prop:pairing}
 The bilinear pairing \smash{$\langle \cdot, \cdot \rangle \colon \Lg_D^{(1)} \times \Lg_D^{(0)} \to \CC$} defined by
 \begin{gather}\label{eq:bmgd}
 \langle Y, X \rangle = T \sum_{r=1}^{N} \Res_{\lda_r =0} \Tr(Y_r(\lda_r) X_r(\lda_r)) {\rm d}\lda + \sum_{r \in \{0, \infty\}} \Res_{\lda_r =0} \Tr(Y_r(\lda_r) X_r(\lda_r)) {\rm d}\lda
\end{gather}
for any \smash{$Y \in \Lg_{D}^{(1)}$} and \smash{$X \in \Lg_{D}^{(0)}$}, is nondegenerate and invariant under the adjoint action of \smash{$\Lg_D^{(0)}$}.
\end{Proposition}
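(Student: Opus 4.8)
The plan is to exploit the fact that the pairing \eqref{eq:bmgd} is \emph{block diagonal} in the index $r \in S$. Since both $\Lg_D^{(0)}$ and $\Lg_D^{(1)}$ are direct sums over $r$ of their local components (with twisting only at $r \in \{0, \infty\}$), and the integrand in each term involves only the product $Y_r X_r$ within a single block, the form decouples as $\langle Y, X \rangle = \sum_{r \in S} c_r \, \Res_{\lda_r = 0} \Tr(Y_r(\lda_r) X_r(\lda_r)) \, \dd\lda$ with $c_r = T$ for $r \in \{1, \dots, N\}$ and $c_r = 1$ for $r \in \{0, \infty\}$. Each local residue is a finite sum because every Laurent series involved has a finite principal part. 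Both invariance and nondegeneracy therefore reduce to the corresponding statements about each local residue pairing separately.

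For invariance I would argue pointwise in $\lda$. For $Z \in \Lg_D^{(0)}$ one has $\ad^\ast_Z Y = [Z, Y] \in \Lg_D^{(1)}$ and $\ad_Z X = [Z, X] \in \Lg_D^{(0)}$, because $\Lg_D^{(0)}$ is a Lie algebra and $\Lg_D^{(1)}$ a module over it, so both terms of the invariance identity are well-defined. Cyclicity of the trace gives, component by component, $\Tr([Z_r, Y_r] X_r) + \Tr(Y_r [Z_r, X_r]) = \Tr(Z_r Y_r X_r) - \Tr(Y_r X_r Z_r) = 0$, an identity of Laurent series. Applying the linear functional $\Res_{\lda_r = 0}(\cdot)\,\dd\lda$ and summing over $r$ with the weights $c_r$ yields $\langle [Z, Y], X \rangle + \langle Y, [Z, X] \rangle = 0$. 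The change-of-variable factor at infinity, $\dd\lda = -\lda_\infty^{-2}\,\dd\lda_\infty$, is common to both terms of the $r = \infty$ block and so does not disturb this cancellation.

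For nondegeneracy I would treat the three kinds of blocks in turn. For a finite pole $r \in \{1, \dots, N\}$ there is no twisting, so the block is the standard residue pairing on $\g \otimes \CC((\lda_r))$, which pairs $\g\,\lda_r^{\,i}$ with $\g\,\lda_r^{-i-1}$ through the trace form on $\g$; nondegeneracy of the trace on $\g$ makes this block nondegenerate in both arguments. For $r = 0$ and $r = \infty$ I would expand $X_r, Y_r$ as in \eqref{X g0D explicit}--\eqref{Y g1D explicit}, read off the residue as the coefficient of $\lda_r^{-1}$ (with the extra factor $-\lda_\infty^{-2}$ at infinity), and check that the paired coefficients land in complementary graded pieces: at the origin $Y_0^{(m)} \in \g^{(-m)}$ pairs with $X_0^{(-m-1)} \in \g^{(m)}$, while at infinity $Y_\infty^{(m)} \in \g^{(m+1)}$ pairs with $X_\infty^{(-m-1)} \in \g^{(-m-1)}$. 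The required input is the grading lemma: because $\sgm$ preserves the trace, $\Tr(\g^{(a)} \g^{(b)}) = 0$ unless $a + b \equiv 0 \pmod{T}$, whence nondegeneracy of the trace on $\g$ descends to a nondegenerate pairing $\g^{(a)} \times \g^{(-a)} \to \CC$. This makes each twisted block nondegenerate, and nondegeneracy of the full form then follows from the block decomposition.

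The main obstacle, and the only place requiring genuine care, is the bookkeeping at $r \in \{0, \infty\}$: one must verify that the equivariance conditions defining $\Lg_0^{\Gamma, 0}, \Lg_0^{\Gamma, 1}, \Lg_\infty^{\Gamma, 0}, \Lg_\infty^{\Gamma, 1}$ in \eqref{Lg Gamma spaces} force exactly the grading matchings above, and one must track the substitution $\dd\lda = -\lda_\infty^{-2}\,\dd\lda_\infty$ so that the residue at infinity extracts the correct coefficient. Everything else is a direct consequence of the nondegeneracy of the trace form on $\g$ together with the grading lemma for $\sgm$.
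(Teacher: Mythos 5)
Your proposal is correct and follows essentially the same route as the paper's own proof: the key ingredient in both is the grading lemma ($\Tr(\g^{(a)}\g^{(b)}) = 0$ unless $a+b \equiv 0 \pmod T$, giving a nondegenerate pairing $\g^{(a)} \times \g^{(-a)} \to \CC$), used to match each coefficient $Y_r^{(m)}$ with a dual coefficient in the complementary graded piece, while invariance reduces to cyclicity of the trace applied component-wise. Your block-diagonal framing and the explicit tracking of $\dd\lda = -\lda_\infty^{-2}\,\dd\lda_\infty$ at infinity are just slightly more explicit presentations of what the paper does implicitly.
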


\begin{proof}
Recall that the trace $\Tr \colon \g \times \g \to \CC$, $(x,y) \mapsto \Tr(xy)$ is a nondegenerate invariant bilinear pairing on $\g$ and invariant under the action of $\sgm$. Given any $x \in \g^{(m)}$ and $y \in \g^{(n)}$, we have $\Tr(xy) = \Tr(\sgm (x) \sgm (y)) = \omg^{m+n} \Tr(xy)$. Therefore, $\Tr(xy) = 0$ for all $x \in \g^{(m)}$ and $y \in \g^{(n)}$ if $m+n \neq 0 \mod T$. Now, since $\Tr \colon \g \times \g \to \CC$ is nondegenerate on $\g$, given $y \in \g^{(n)}$, there is $x\in \g$ such that $\Tr(xy)\neq 0$. From \eqref{gradation}, we have $x=x^{(0)}+\dots+x^{(T-1)}$, and from the previous result $\Tr(xy)=\Tr\bigl(x^{(m)}y\bigr)$ with $m+n= 0\mod T$. Hence, $\Tr\bigl(x^{(m)}y\bigr)\neq 0$ and we get a nondegenerate pairing between the subspaces \smash{$\g^{(m)}$} and \smash{$\g^{(n)}$} with $m+n = 0\mod T$.

Now given any non-zero element $Y \in \Lg_{D}^{(1)}$, it has a non-zero component $Y_r^{(m)} \lambda_r^{-m-1}$ for some $m \in \ZZ$ and $r \in S$. If $r=0$ then \smash{$Y_0^{(m)} \in \g^{(-m)}$} by \eqref{eq:coeffcond}, and we can find an \smash{$X_0^{(-m-1)} \in \g^{(m)}$} which pairs non-trivially with it under the trace. So, \smash{$X_0^{(-m-1)} \lambda^m$} pairs non-trivially with $Y$. Likewise, if $r=\infty$ then \smash{$Y_\infty^{(m)} \in \g^{(m+1)}$} by \eqref{eq:coeffcond} and we can find an \smash{$X_\infty^{(-m-1)} \in \g^{(-m-1)}$} which pairs non-trivially with it under the trace so that \smash{$X_\infty^{(-m-1)} \lambda_\infty^{m+1}$} pairs non-trivially with $Y$. And if $r \in \{1,\dots, N\}$ then pick any \smash{$X_r^{(-m-1)} \in \g$} which pairs non-trivially with \smash{$Y_r^{(m)} \in \g$} under the trace so that \smash{$X_r^{(-m-1)} \lambda_r^m$} pairs non-trivially with $Y$. Therefore, the bilinear pairing $\langle \cdot, \cdot \rangle$ is nondegenerate in the first argument. A similar argument establishes the nondegeneracy in the second argument.

Finally, the invariance of the bilinear pairing $\langle \cdot, \cdot \rangle$ under the adjoint action of \smash{$\Lg_D^{(0)}$} follows from the definition \eqref{Lie bracket Lg} of the Lie bracket in $\Lg_D$ and the invariance of the trace under the adjoint action of $\g$.
\end{proof}

Next, we turn to the identification of the subalgebras $\mathbb{g}_\pm \subset \mathbb{g}$ and the corresponding subspaces $V_\pm \subset V$ in the notation of Section~\ref{review_Lag_multi}.
Crucially for us, the spaces $\mathcal{F}_{D^\prime}^\Gamma$ and $\Omega_{D^\prime}^\Gamma$ embed into \smash{$\Lg_{D}^{(0)}$} and \smash{$\Lg_{D}^{(1)}$}, respectively,
\begin{align}
 &\iota_\lda \colon \ \mathcal{F}_{D^\prime}^\Gamma \longhookrightarrow \Lg_D^{(0)} , \qquad f \longmapsto (\iota_{\lda_0}f, \iota_{\lda_1}f, \dots, \iota_{\lda_N}f, \iota_{\lda_\infty}f) ,\nonumber\\
 &\iota_\lda \colon \ \Omega_{D^\prime}^\Gamma \longhookrightarrow \Lg_D^{(1)} , \qquad g \longmapsto (\iota_{\lda_0}g, \iota_{\lda_1}g, \dots, \iota_{\lda_N}g, \iota_{\lda_\infty}g) ,\label{iota map def}
\end{align}
where, for each $r \in S$, $\iota_{\lda_r}f$ and $\iota_{\lda_r}g$ respectively denote the formal Laurent expansion of~$f$ and~$g$ about the point $\zeta_r$.
It will also be useful to define left inverses for these embeddings:
\begin{align}
 &\pi_\lda^{(0)} \colon \ \Lg_D^{(0)} \longrightarrow \mathcal{F}_{D^\prime}^\Gamma ,\nonumber\\
 &\pi_\lda^{(1)} \colon \ \Lg_D^{(1)} \longrightarrow \Omega_{D^\prime}^\Gamma ,\label{pi01 def}
\end{align}
defined explicitly as
\begin{align*}
 &\pi_\lda^{(0)} \Biggl(\sum_{n=-\infty}^{M_0 - 1} \frac{X_0^{(n)}}{\lda^{n+1}}, \sum_{n=-\infty}^{M_1 - 1} \frac{X_1^{(n)}}{(\lda - \zeta_1)^{n+1}}, \dots, \sum_{n=-\infty}^{M_N - 1} \frac{X_N^{(n)}}{(\lda - \zeta_N)^{n+1}}, \sum_{n=-\infty}^{M_\infty} \frac{X_\infty^{(n)}}{\lda_\infty^n} \Biggr) \notag \\
 & \qquad{} = \sum_{n=0}^{M_0 - 1} \frac{X_0^{(n)}}{\lda^{n+1}} + \sum_{r=1}^N \sum_{k=0}^{T-1} \sum_{n = 0}^{M_r - 1} \frac{\sgm^k X_r^{(n)}}{(\omg^{-k}\lda - \zeta_r)^{n+1}} + \sum_{n=0}^{M_\infty} \frac{X_\infty^{(n)}}{\lda_\infty^n} ,
\end{align*}
and
\begin{align*}
 &\pi_\lda^{(1)} \Biggl(\sum_{n=-\infty}^{N_0 - 1} \frac{Y_0^{(n)}}{\lda^{n+1}}, \sum_{n=-\infty}^{N_1 - 1} \frac{Y_1^{(n)}}{(\lda - \zeta_1)^{n+1}}, \dots, \sum_{n=-\infty}^{N_N - 1} \frac{Y_N^{(n)}}{(\lda - \zeta_N)^{n+1}}, \sum_{n=-\infty}^{N_\infty} \frac{Y_\infty^{(n)}}{\lda_\infty^n} \Biggr) \notag \\
 & \qquad{}= \sum_{n=0}^{N_0 - 1} \frac{Y_0^{(n)}}{\lda^{n+1}} + \sum_{r=1}^N \sum_{k=0}^{T-1} \sum_{n=0}^{N_r - 1} \frac{\omg^{-k} \sgm^k Y_r^{(n)}}{(\omg^{-k} \lda - \zeta_r)^{n+1}} + \sum_{n=0}^{N_\infty} \frac{Y_\infty^{(n)}}{\lda_\infty^n} .
\end{align*}
We will also need to define the following subspaces of $\Lg_D^{(0)}$ and $\Lg_D^{(1)}$, respectively:
\begin{equation*}
 \Lg_{D+}^{(0)} = \Lg_{0+}^{\Gamma, 0} \oplus \bigoplus_{r=1}^N \Lg_{r+} \oplus \Lg_{\infty+}^{\Gamma, 1} \qquad \text{and} \qquad \Lg_{D+}^{(1)} = \Lg_{0+}^{\Gamma, 1} \oplus \bigoplus_{r=1}^N \Lg_{r+} \oplus \Lg_{\infty+}^{\Gamma, 1},
\end{equation*}
where for $k = 0, 1$, we introduced (cf. \eqref{Lgr def} and \eqref{Lg Gamma spaces})
\begin{align*}
 &\Lg_{0+}^{\Gamma, k} = \Lg_{0}^{\Gamma, k} \cap \g \otimes \mathbb{C}[[\lda_0]] ,\\
 &\Lg_{r+} = \g \otimes \mathbb{C}[[\lda_r]] , \qquad r \in \{1, \dots, N \} ,\\
 &\Lg_{\infty+}^{\Gamma, k} = \Lg_{\infty}^{\Gamma, k} \cap \g \otimes \lda_\infty \mathbb{C}[[\lda_\infty]] .
\end{align*}
Here we denoted by $\g \otimes \mathbb{C}[[\lda_r]]$ the algebra of formal Taylor series in $\lda_r$ with coefficients in $\g$, for $r \neq \infty$, and by $\g \otimes \lda_\infty \mathbb{C}[[\lda_\infty]]$ the algebra of formal Taylor series in $\lda_\infty$ with coefficients in~$\g$ without constant term.

Coming back to the identification of the Lie dialgebra ingredients from Section~\ref{review_Lag_multi}, the next proposition identifies the desired decomposition \eqref{decomposition g} of \smash{$\mathbb{g} = \Lg_D^{(0)}$} into complementary subalge\-bras~${\mathbb{g}_\pm \subset \mathbb{g}}$. Explicitly, we have the identifications \smash{$\mathbb{g}_+ = \Lg_{D+}^{(0)}$} and \smash{$\mathbb{g}_- = \iota_\lda \mathcal{F}_{D^\prime}^\Gamma$}. We also identify a~decomposition \eqref{decomposition V} of our model \smash{$V = \Lg_D^{(1)}$} for the dual space $\mathbb{g}^\ast$ into complementary subspaces $V_\pm \subset V$, where explicitly \smash{$V_+ = \Lg_{D+}^{(1)}$} and $V_- = \iota_\lda \Omega_{D^\prime}^\Gamma$, but will show only later in Proposition~\ref{prop:pairequivsp} that this decomposition of $V$ is the desired one induced by that of $\mathbb{g}$.

\begin{Proposition}\label{prop:gddecomp}
 The spaces \smash{$\Lg_D^{(0)}$} and \smash{$\Lg_D^{(1)}$} admit the vector space decompositions
 \begin{align}
 \label{eq:gdzerodecomp}
 &\Lg_D^{(0)} = \Lg_{D+}^{(0)} \dotplus \iota_\lda \mathcal{F}_{D^\prime}^\Gamma ,\\
 \label{eq:gdonedecomp}
 &\Lg_D^{(1)} = \Lg_{D+}^{(1)} \dotplus \iota_\lda \Omega_{D^\prime}^\Gamma ,
 \end{align}
 respectively. Moreover, the subspaces \smash{$\Lg_{D+}^{(0)}$} and \smash{$\iota_\lda \mathcal{F}_{D^\prime}^\Gamma$} are Lie subalgebras of \smash{$\Lg_D^{(0)}$}.
\end{Proposition}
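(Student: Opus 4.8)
The plan is to establish both decompositions by splitting an arbitrary element into its global \emph{principal/polynomial} part, realised as a $\Gamma$-equivariant rational function, and a purely \emph{regular} (Taylor) remainder, and then to deduce directness from the left-inverse property of the projections \eqref{pi01 def}.

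For existence in \eqref{eq:gdzerodecomp}, given $X\in\Lg_D^{(0)}$ written as in \eqref{X g0D explicit}, I would set $f=\pi_\lda^{(0)}X$ and claim $f\in\mathcal{F}_{D'}^\Gamma$. By the explicit formula for $\pi_\lda^{(0)}$ in \eqref{pi01 def}, this collects the negative-power (principal) parts of $X$ at the origin and at each $\zeta_r$, together with the non-negative-power (constant-plus-polynomial) part at infinity, and redistributes the finite-point data to the Galois conjugates $\omg^k\zeta_r$ through the twist $\sgm^k$. The essential point is that $f$ is genuinely equivariant: this is exactly where the grading constraints \eqref{eq:coeffcond} on the coefficients of $X$ are used, since $X_0^{(n)}\in\g^{(-n-1)}$ and $X_\infty^{(n)}\in\g^{(n)}$ are precisely the conditions imposed by $\sgm(f(\lda))=f(\omg\lda)$ at the two fixed points, while at the finite points equivariance is built into the twisted terms $\sgm^kX_r^{(n)}/(\omg^{-k}\lda-\zeta_r)^{n+1}$ via $f(\lda)=\sgm^k(f(\omg^{-k}\lda))$. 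Once this is checked, $X-\iota_\lda f$ carries no singular data at any point of $D'$ and no constant term at infinity, hence is a tuple of genuine Taylor series belonging to $\Lg_{D+}^{(0)}$. The decomposition \eqref{eq:gdonedecomp} is obtained in the same way with $\pi_\lda^{(1)}$ and $\Omega_{D'}^\Gamma$ in place of $\pi_\lda^{(0)}$ and $\mathcal{F}_{D'}^\Gamma$; every grading is shifted by one as in the second line of \eqref{eq:coeffcond}, the only substantive difference being that the constant term at infinity, now valued in $\g^{(1)}$, is carried by the equivariant one-form rather than by the positive part.

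For directness I would use that $\pi_\lda^{(0)}$ is a left inverse of $\iota_\lda$. If $\iota_\lda f=X_+\in\Lg_{D+}^{(0)}$ for some $f\in\mathcal{F}_{D'}^\Gamma$, then reading off the singular data shows $\pi_\lda^{(0)}X_+=0$, since a Taylor tuple has no principal part at the origin or at any $\zeta_r$ and no constant term at infinity; hence $f=\pi_\lda^{(0)}\iota_\lda f=0$ and the sum is direct. Conceptually this is the Liouville statement that a $\Gamma$-equivariant rational function on $\CP$ regular at every point of $D'$ and vanishing at infinity must be zero; the role of the missing constant term at infinity in $\Lg_{D+}^{(0)}$ is precisely to exclude the globally regular constants (valued in $\g^{(0)}$, and in $\g^{(1)}$ for the one-form case \eqref{eq:gdonedecomp}), which is what makes both sums direct rather than merely spanning.

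For the subalgebra assertions, I would first note that $\iota_\lda$ is a homomorphism of Lie algebras, because the Laurent expansion at each point of $D'$ intertwines the pointwise bracket on $\mathcal{F}_{D'}$ with the componentwise bracket \eqref{Lie bracket Lg}; since $\mathcal{F}_{D'}^\Gamma$ is itself a Lie subalgebra of $\mathcal{F}_{D'}$, its image $\iota_\lda\mathcal{F}_{D'}^\Gamma$ is a subalgebra of $\Lg_D^{(0)}$. That $\Lg_{D+}^{(0)}$ is a subalgebra reduces to closure at each point: the finite pieces $\g\otimes\CC[[\lda_r]]$ are obviously closed, and at the origin and at infinity the product of two Taylor series is again Taylor, while the gradation property $[\g^{(a)},\g^{(b)}]\subset\g^{(a+b)}$ from \eqref{gradation} guarantees that the equivariance constraints are preserved under the bracket; at infinity the orders add, so the vanishing of the constant term is preserved as well. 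The step I expect to be the main obstacle is the equivariance verification inside the existence argument: checking that the rational function reconstructed from the single representative $\zeta_r$ has exactly the prescribed principal part at each conjugate $\omg^k\zeta_r$, and that the grading conditions \eqref{eq:coeffcond} at the origin and at infinity match the relation $\sgm(f(\lda))=f(\omg\lda)$, so that $\pi_\lda^{(0)}$ and $\pi_\lda^{(1)}$ are well defined into $\mathcal{F}_{D'}^\Gamma$ and $\Omega_{D'}^\Gamma$ respectively. Once that is in place, the Taylor remainder, the left-inverse property, and the subalgebra closures are all routine.
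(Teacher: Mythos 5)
Your proposal is correct and follows essentially the same route as the paper: associate to $X$ the equivariant function $f=\pi_\lda^{(0)}X$, observe that $X-\iota_\lda f$ is a tuple of formal Taylor series lying in \smash{$\Lg_{D+}^{(0)}$}, and repeat with $\pi_\lda^{(1)}$ for the second decomposition. The paper's proof is terser---it simply asserts uniqueness and omits the subalgebra verification (having noted earlier in the text that $\mathcal{F}_{D^\prime}^\Gamma$ is a Lie subalgebra)---so your explicit directness argument via the left-inverse property of $\pi_\lda^{(0)}$ and your closure checks are welcome elaborations of the same argument, not a different approach.
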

\begin{proof}
 To any $X \!=\! (X_0, X_1, \dots, X_N, X_\infty) \!\in\! \Lg_D^{(0)}$, we associate an equivariant function ${f \!=\! \pi_\lda^{(0)}X}$. Notice that for all $r \in S$, $X_r - \iota_{\lda_r} f$ is a formal Taylor series in $\lda_r$. It follows that $X$ splits uniquely as the direct sum of the tuples \smash{$(X_0 - \iota_{\lda_0} f, X_1 - \iota_{\lda_1} f, \dots, X_N - \iota_{\lda_N}f, X_\infty - \iota_{\lda_\infty} f) \in \Lg_{D+}^{(0)}$} and $(\iota_{\lda_0} f, \iota_{\lda_1} f, \dots, \iota_{\lda_N}f, \iota_{\lda_\infty} f) \in \iota_\lda \mathcal{F}_{D^\prime}^\Gamma$, as required.

 One can repeat the above steps \big(with the map $\pi_\lda^{(1)}$ acting on some $Y \in \Lg_D^{(1)}$\big) to prove that~\eqref{eq:gdonedecomp} defines a vector space decomposition as well.
\end{proof}

Let $P_+$ and $P_-$ denote the projectors onto the subspaces $\Lg_{D+}^{(0)}$ and $\iota_\lda \mathcal{F}_{D^\prime}^\Gamma$ respectively, relative~to the decomposition in \eqref{eq:gdzerodecomp}, and $\P_+$ and $\P_-$ the projectors onto \smash{$\Lg_{D+}^{(1)}$} and \smash{$\iota_\lda \Omega_{D^\prime}^\Gamma$} respectively, relative to the decomposition in \eqref{eq:gdonedecomp}, in line with the notation from Section~\ref{review_Lag_multi}. As~recalled in~Section~\ref{review_Lag_multi}, the linear map
\begin{equation} \label{R def}
 R = P_+ - P_-
\end{equation}
is a solution of the mCYBE on \smash{$\Lg_{D}^{(0)}$}. It is also useful to define the maps
\begin{equation*}
 R_\pm = \frac{1}{2}(R \pm \text{id}),
\end{equation*}
which can be related to the projectors onto \smash{$\Lg_{D+}^{(0)}$} and \smash{$\iota_\lda \mathcal{F}_{D^\prime}^\Gamma$} relative to direct sum decomposition~\eqref{eq:gdzerodecomp} as
\begin{equation*}
 R_\pm = \pm P_\pm .
\end{equation*}
The linear map $R$ is related to the non-skew-symmetric $r$-matrix \eqref{eq:nss-rmat} underlying the cyclotomic Gaudin model, as one would naturally expect from the construction. More precisely, the expression \eqref{eq:nss-rmat} provides the kernel of the linear map \eqref{R def} with respect to the bilinear pairing \eqref{eq:bmgd}. To show this, we will use standard tensor product space notation as follows. Given any rational function $U_{\ti{12}}(\lambda, \mu)$ in the parameters $\lda$ and $\mu$ such that \smash{$\iota_\mu \iota_\lda U_{\ti{12}}, \iota_\lda \iota_\mu U_{\ti{12}} \in \Lg_D^{(0)}\, \widetilde{\otimes}\, \Lg_D^{(1)}$}, where the first tensor factor in the formally completed tensor product is the loop algebra \smash{$\Lg_D^{(0)}$} with the loop parameter $\lambda$, and the second tensor factor is the space \smash{$\Lg_D^{(1)}$} with the loop parameter $\mu$, we introduce the following shorthand notations:
\begin{align*}
 {\bigl\langle \iota_\mu \iota_\lda U_{\ti{12}}, X_{\ti{2}} \bigr\rangle}_{\ti{2}} &{}= T \sum_{r=1}^N \Res_{\mu_r = 0} \Tr_{\ti{2}} \bigl( \iota_{\mu_r} \iota_\lda U_{\ti{12}}(\lda, \mu) X_{r \ti{2}}(\mu_r)\bigr){\rm d}\mu \notag \\
 &\quad{} + \sum_{r \in \{0, \infty\}} \Res_{\mu_r = 0} \Tr_{\ti{2}} \bigl( \iota_{\mu_r} \iota_\lda U_{\ti{12}}(\lda, \mu) X_{r \ti{2}}(\mu_r)\bigr){\rm d}\mu ,
\end{align*}
and
\begin{align*}
 {\bigl\langle \iota_\lda \iota_\mu U_{\ti{12}}, X_{\ti{2}} \bigr\rangle}_{\ti{2}} &{}= T \sum_{r=1}^N \Res_{\mu_r = 0} \Tr_{\ti{2}} \bigl( \iota_\lda \iota_{\mu_r} U_{\ti{12}}(\lda, \mu) X_{r \ti{2}}(\mu_r)\bigr){\rm d}\mu \notag \\
 &\quad{} + \sum_{r \in \{0, \infty\}} \Res_{\mu_r = 0} \Tr_{\ti{2}} \bigl( \iota_\lda \iota_{\mu_r} U_{\ti{12}}(\lda, \mu) X_{r \ti{2}}(\mu_r)\bigr){\rm d}\mu
\end{align*}
for any \smash{$X \in \Lg_D^{(0)}$}, where the parameters $\mu_r$ are defined analogously to the parameters $\lda_r$ in \eqref{lambda r def}, and the linear map $\iota_\mu$ is defined as in \eqref{iota map def}, returning the tuple of formal Laurent expansions in $\mu_r$, $r \in S$.
\begin{Proposition}\label{prop:projrel}
For all \smash{$X \in \Lg_D^{(0)}$}, the linear maps $R_+$ and $R_-$ satisfy
\begin{equation}\label{eq:projrel}
 R_+(X) = {\bigl\langle \iota_\mu \iota_\lda r_{\ti{12}}, X_{\ti{2}} \bigr\rangle}_{\ti{2}} \qquad \text{and} \qquad R_-(X) = {\bigl\langle \iota_\lda \iota_\mu r_{\ti{12}}, X_{\ti{2}} \bigr\rangle}_{\ti{2}},
\end{equation}
respectively, where
\begin{equation*}
 r_{\ti{12}}(\lda, \mu) = \frac{1}{T} \sum_{k=0}^{T-1} \frac{\sgm_{\ti{1}}^k C_{\ti{12}}}{\mu - \omg^{-k}\lda} .
\end{equation*}
\end{Proposition}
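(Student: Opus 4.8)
The plan is to first reduce the statement to one about the projectors $P_\pm$. Since $R_\pm = \pm P_\pm$ relative to the decomposition \eqref{eq:gdzerodecomp}, it is equivalent to prove that the two linear maps $\Phi_+(X) := {\bigl\langle \iota_\mu \iota_\lda r_{\ti{12}}, X_{\ti{2}} \bigr\rangle}_{\ti{2}}$ and $\Phi_-(X) := {\bigl\langle \iota_\lda \iota_\mu r_{\ti{12}}, X_{\ti{2}} \bigr\rangle}_{\ti{2}}$ coincide with $P_+$ and $-P_-$, respectively. I would establish this through three facts: (i) the completeness relation $\Phi_+(X) - \Phi_-(X) = X$ for all $X \in \Lg_D^{(0)}$; (ii) $\Img \Phi_+ \subseteq \Lg_{D+}^{(0)}$; and (iii) $\Img \Phi_- \subseteq \iota_\lda \mathcal{F}_{D^\prime}^\Gamma$. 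Granting these, for any $X$ one writes $X = \Phi_+(X) - \Phi_-(X)$ with $\Phi_+(X) \in \Lg_{D+}^{(0)}$ and $-\Phi_-(X) \in \iota_\lda \mathcal{F}_{D^\prime}^\Gamma$; the uniqueness of the direct sum decomposition \eqref{eq:gdzerodecomp} from Proposition~\ref{prop:gddecomp} then forces $\Phi_+(X) = P_+ X = R_+(X)$ and $-\Phi_-(X) = P_- X$, i.e., $\Phi_-(X) = -P_- X = R_-(X)$, which is the claim.

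The computational engine behind all three facts is the reproducing property of the twisted Casimir. Writing $\sgm_{\ti 1}^k C_{\ti{12}} = \sgm^k(I_a) \otimes I^a$ and using that $\{I_a\}, \{I^a\}$ are dual bases for the trace, one has $\Tr_{\ti 2}\bigl(\sgm_{\ti 1}^k C_{\ti{12}} Z_{\ti 2}\bigr) = \sgm^k(Z)$ for any $Z \in \g$. This collapses each contraction $\langle \cdot, X_{\ti 2}\rangle_{\ti 2}$ into a sum over $k$ of residues in $\mu$ of a purely scalar kernel $\tfrac1T (\mu - \omg^{-k}\lda)^{-1}$ acting on $\sgm^k$ applied to the components of $X$. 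Fact~(i) then rests on the observation that $\iota_\mu \iota_\lda$ and $\iota_\lda \iota_\mu$ produce the two geometric expansions of $(\mu - \omg^{-k}\lda)^{-1}$ about a common point, and that their difference is the formal Cauchy (delta) kernel, whose residue against a Laurent series reproduces that series; the weight $T$ attached to the finite points in the pairing \eqref{eq:bmgd} is exactly what cancels the prefactor $\tfrac1T$ in $r_{\ti{12}}$.

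The heart of the argument, and the step I expect to be the main obstacle, is the point-by-point bookkeeping of the cyclotomic structure. At a finite point $\zeta_s$, the disjointness of the $\Gamma$-orbits ensures that only the $k=0$ term of $r_{\ti{12}}$ is singular on the diagonal $\mu = \lda$ near $\zeta_s$; all terms with $k \neq 0$ are regular there, so their $\iota_\mu\iota_\lda$ and $\iota_\lda\iota_\mu$ expansions agree and cancel in the difference, reducing fact~(i) at $\zeta_s$ to the untwisted rational computation in which $\Phi_+$ selects the Taylor part and $\Phi_-$ minus the principal part of $X_s$. At the fixed points $0$ and $\infty$, by contrast, every term $k = 0, \dots, T-1$ is singular on the diagonal, the pairing carries weight $1$ rather than $T$, and one must track the $\sgm$-eigenvalue constraints from \eqref{eq:coeffcond}: expanding $(\mu - \omg^{-k}\lda)^{-1}$ produces a weight $\omg^{-kj}$, while $\sgm^k$ acting on $X_0^{(n)} \in \g^{(-n-1)}$ (respectively $X_\infty^{(n)} \in \g^{(n)}$) produces a compensating root-of-unity factor, so that the surviving residue carries the factor $\tfrac1T \sum_{k=0}^{T-1} \omg^0 = 1$. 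It is precisely this cancellation of twist exponents that makes the weight-$1$ normalisation at the fixed points consistent with the weight-$T$ normalisation at the generic points.

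Finally, facts~(ii) and~(iii) are read off from the same expansions. Since applying $\iota_\lda$ to the first tensor slot always returns a series in nonnegative powers of the local parameter, the Taylor part being selected by the residue on the diagonal and pure regularity appearing off it, $\Phi_+(X)$ is a tuple of Taylor series satisfying the eigenspace conditions \eqref{eq:coeffcond}, hence lies in $\Lg_{D+}^{(0)}$. Dually, $\Phi_-(X)$ collects principal parts which, once the sums over $k$ and the operators $\sgm^k$ are reassembled, form the Laurent expansions of a single $\Gamma$-equivariant rational function with poles in $D^\prime$, that is, an element of $\iota_\lda \mathcal{F}_{D^\prime}^\Gamma$. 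Verifying the equivariance $\sgm(f(\lda)) = f(\omg\lda)$ of this rational function together with the correct location of its poles is the most delicate point of~(iii); I would carry it out by matching coefficients orbit by orbit, using the $\Gamma$-equivariance already built into the maps $\iota_\lda$ and $\pi_\lda^{(0)}$.
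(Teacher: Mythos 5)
Your proposal is correct, but it is organised genuinely differently from the paper's proof. The paper decomposes the \emph{input}: writing $X = W + Z$ with \smash{$W \in \Lg_{D+}^{(0)}$} and $Z \in \iota_\lda \mathcal{F}_{D^\prime}^\Gamma$ via Proposition~\ref{prop:gddecomp}, it uses the explicit expansions \eqref{eq:expldar}--\eqref{eq:expmur} to show that the kernel $\iota_\mu\iota_\lda r_{\ti{12}}$ reproduces $W$ and annihilates $Z$ --- the latter because relation \eqref{eq:respropallpoles} turns the weighted pairing into the sum of \emph{all} residues of a meromorphic one-form on $\CP$, which vanishes by the residue theorem --- and then repeats the argument with the roles reversed for $\iota_\lda\iota_\mu r_{\ti{12}}$. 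You instead decompose the \emph{output}: you keep $X$ arbitrary, prove the completeness relation $\Phi_+(X) - \Phi_-(X) = X$ by the formal delta-kernel identity (your twist bookkeeping is exactly right: only $k=0$ is singular on the diagonal at the generic points $\zeta_s$ by disjointness of the $\Gamma$-orbits, while at the fixed points $0$ and $\infty$ the eigenvalue of $\sgm^k$ on the coefficients from \eqref{eq:coeffcond} cancels the expansion weights, leaving $\tfrac{1}{T}\sum_k 1 = 1$), establish the image containments $\Img \Phi_+ \subseteq \Lg_{D+}^{(0)}$ and $\Img \Phi_- \subseteq \iota_\lda \mathcal{F}_{D^\prime}^\Gamma$, and conclude by uniqueness of the direct sum \eqref{eq:gdzerodecomp}. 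Both arguments share the same computational core (the expansions of $r_{\ti{12}}$ and the reproducing property $\Tr_{\ti{2}}\bigl(\sgm_{\ti{1}}^k C_{\ti{12}} Z_{\ti{2}}\bigr) = \sgm^k(Z)$), but the trade-off is real: your route is purely local and formal, with no appeal to the global residue theorem, and it delivers as a by-product precisely the identity-kernel statement $(\iota_\mu\iota_\lda - \iota_\lda\iota_\mu)r_{\ti{12}} \leftrightarrow \mathrm{Id}$ that the paper only records as a corollary after the proposition; the price is your step (iii), the reassembly of $\Phi_-(X)$ into the tuple of expansions of a single $\Gamma$-equivariant rational function with poles in $D^\prime$, a verification the paper's route avoids entirely because it only ever evaluates the two maps on elements already known to lie in one of the two subalgebras. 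That step does go through --- the $\tfrac{1}{T}\sum_k \sgm^k(\cdot)$ structure of the kernel enforces equivariance automatically, as you anticipate --- so your plan is sound, but be aware that it concentrates there the technical burden that the paper off-loads onto the residue theorem.
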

\begin{proof}
In what follows, to treat the origin on the same footing as the points $\zeta_r$, $r \in \{1, \dots, N \}$, it will be convenient to introduce the notation $\zeta_0 = 0$. We have
 \begin{gather}
 \iota_{\lda_s} \Biggl( \frac{1}{T} \sum_{k=0}^{T-1} \frac{\sgm_{\ti{1}}^k C_{\ti{12}}}{\mu - \omg^{-k}\lda} \Biggr) \nonumber\\
 \qquad{} = \begin{dcases}
 \frac{1}{T} \sum_{k=0}^{T-1} \sum_{m=0}^{\infty} \frac{\omg^{-km}(\lda - \zeta_s)^m \sgm_{\ti{2}}^{-k} C_{\ti{12}}}{\bigl(\mu - \omg^{-k}\zeta_s\bigr)^{m+1}} &\text{for}\ s \in \{0, 1, \dots, N \},\\
 - \frac{1}{T} \sum_{k=0}^{T-1} \sum_{m=0}^{\infty} \frac{\omg^{k(m+1)} \mu^m \sgm_{\ti{2}}^{-k} C_{\ti{12}}}{\lda^{m+1}} &\text{for}\ s = \infty ,\\
 \end{dcases}\label{eq:expldar}
 \end{gather}
 and
 \begin{gather}
 \iota_{\mu_r} \Biggl( \frac{1}{T} \sum_{k=0}^{T-1} \frac{\sgm_{\ti{1}}^k C_{\ti{12}}}{\mu - \omg^{-k}\lda} \Biggr)\nonumber\\
 \qquad{} = \begin{dcases}
 - \frac{1}{T} \sum_{k=0}^{T-1} \sum_{m=0}^{\infty} \frac{\omg^{k(m+1)}(\mu - \zeta_r)^m \sgm_{\ti{1}}^k C_{\ti{12}}}{\bigl(\lda - \omg^k\zeta_r\bigr)^{m+1}} &\text{for}\ r \in \{0, 1, \dots, N \},\\
 \frac{1}{T} \sum_{k=0}^{T-1} \sum_{m=0}^{\infty} \frac{\omg^{-km} \lda^m \sgm_{\ti{1}}^k C_{\ti{12}}}{\mu^{m+1}} &\text{for}\ r = \infty .\\
 \end{dcases}\label{eq:expmur}
 \end{gather}
It follows that both $\iota_\mu \iota_\lda r_{\ti{12}}$ and \smash{$\iota_\lda \iota_\mu r_{\ti{12}}$} are elements of \smash{$\Lg_D^{(0)} \, \widetilde{\otimes}\, \Lg_D^{(1)}$} where the loop parameter~$\lambda$~is in the first tensor factor and $\mu$ in the second. Now, pick an arbitrary \smash{$X \in \Lg_D^{(0)}$} and let $X = W+Z$ be its decomposition relative to \eqref{eq:gdzerodecomp} where the two components can be written explicitly as
 \begin{equation*}
 W = \Biggl( \sum_{n=0}^\infty W_0^{(n)} \lda^n, \sum_{n=0}^\infty W_1^{(n)} (\lda - \zeta_1)^n, \dots, \sum_{n=0}^\infty W_N^{(n)} (\lda - \zeta_N)^n, \sum_{n=1}^\infty W_\infty^{(n)}\lda_\infty^n \Biggr) \in \Lg_{D+}^{(0)} ,
 \end{equation*}
 and
 \begin{equation*}
 Z = \iota_\lda \Biggl( \sum_{n=0}^{N_0 - 1} \frac{Z_0^{(n)}}{\lda^{n+1}} + \sum_{r=1}^{N} \sum_{k=0}^{T-1} \sum_{n=0}^{N_r - 1} \frac{\sgm^k Z_r^{(n)}}{\bigl(\omg^{-k}\lda - \zeta_r\bigr)^{n+1}} + \sum_{n=0}^{N_\infty} \frac{Z_\infty^{(n)}}{\lda_\infty^n} \Biggr) \in \iota_\lda \mathcal{F}_{D^\prime}^\Gamma .
 \end{equation*}
Using the expansion \eqref{eq:expldar}, we get
\begin{align*}
 &T \sum_{r=1}^N \Res_{\mu_r = 0} \Tr_{\ti{2}} \bigl( \iota_{\mu_r} \iota_{\lda_s} r_{\ti{12}}(\lda, \mu) W_{r \ti{2}}(\mu_r)\bigr){\rm d}\mu \notag\\
 &\qquad\quad{}+ \sum_{r \in \{0, \infty\}} \Res_{\mu_r = 0} \Tr_{\ti{2}} \bigl( \iota_{\mu_r} \iota_{\lda_s} r_{\ti{12}}(\lda, \mu) W_{r \ti{2}}(\mu_r)\bigr){\rm d}\mu \notag \\
 &\qquad{}= \sum_{m=0}^\infty W_s^{(m)} (\lda - \zeta_s)^m, \qquad \text{when} \quad s \in \{0, 1, \dots, N\},
\end{align*}
and
\begin{align*}
 &T \sum_{r=1}^N \Res_{\mu_r = 0} \Tr_{\ti{2}} \bigl( \iota_{\mu_r} \iota_{\lda_s} r_{\ti{12}}(\lda, \mu) W_{r \ti{2}}(\mu_r)\bigr){\rm d}\mu \notag \\
 &\qquad\quad{}+ \sum_{r \in \{0, \infty\}} \Res_{\mu_r = 0} \Tr_{\ti{2}} \bigl( \iota_{\mu_r} \iota_{\lda_s} r_{\ti{12}}(\lda, \mu) W_{r \ti{2}}(\mu_r)\bigr){\rm d}\mu \notag\\
 &\qquad{}= \sum_{m=1}^\infty W_\infty^{(m)} \lda_\infty^m, \qquad \text{when} \quad s = \infty.
\end{align*}
So, we find that
 \begin{equation}\label{eq:rplusu}
 {\bigl\langle \iota_\mu \iota_\lda r_{\ti{12}}, W_{\ti{2}} \bigr\rangle}_{\ti{2}} = W .
\end{equation}
To evaluate \smash{${\bigl\langle \iota_\mu \iota_\lda r_{\ti{12}}, Z_{\ti{2}} \bigr\rangle}_{\ti{2}}$}, we note that for an arbitrary equivariant rational function ${f \in \mathcal{F}_{D^\prime}^\Gamma}$ and equivariant rational one-form \smash{$g \in \Omega_{D^\prime}^\Gamma$}, we have the following relation:
\begin{align}\label{eq:respropallpoles}
 \Res_{\lda_r = 0} \Tr(g(\lda)f(\lda)){\rm d}\lda &= \Res_{\lda_r = 0} \Tr\bigl(\sgm^k (g(\lda))\sgm^k (f(\lda))\bigr){\rm d}\lda \notag \\
 &= \Res_{\lda_r = 0} \Tr\bigl(\omg ^k g(\omg^k \lda)f(\omg^k \lda)\bigr){\rm d}\lda \notag \\
 &= \Res_{\lda_{r,k} = 0} \Tr(g(\lda)f(\lda)){\rm d}\lda
\end{align}
for all $k \in \{0, \dots, T-1\}$. This allows us to rewrite each entry of the tuple \smash{${\bigl\langle \iota_\mu \iota_\lda r_{\ti{12}}, Z_{\ti{2}} \bigr\rangle}_{\ti{2}}$} as
\begin{align*}
 &T \sum_{r=1}^N \Res_{\mu_r = 0} \Tr_{\ti{2}} \bigl( \iota_{\mu_r} \iota_{\lda_s} r_{\ti{12}}(\lda, \mu) Z_{r \ti{2}}(\mu_r)\bigr){\rm d}\mu \notag \\
 &\qquad\quad{}+ \sum_{r \in \{0, \infty\}} \Res_{\mu_r = 0} \Tr_{\ti{2}} \bigl( \iota_{\mu_r} \iota_{\lda_s} r_{\ti{12}}(\lda, \mu) Z_{r \ti{2}}(\mu_r)\bigr){\rm d}\mu \notag\\
 &\qquad{}= \sum_{r=1}^N \sum_{k=0}^{T-1} \Res_{\mu_{r, k} = 0} \Tr_{\ti{2}} \bigl( \iota_{\mu_r} \iota_{\lda_s} r_{\ti{12}}(\lda, \mu) Z_{r \ti{2}}(\mu_r)\bigr){\rm d}\mu \notag\\
 &\qquad\quad{}+ \sum_{r \in \{0, \infty\}} \Res_{\mu_r = 0} \Tr_{\ti{2}} \bigl( \iota_{\mu_r} \iota_{\lda_s} r_{\ti{12}}(\lda, \mu) Z_{r \ti{2}}(\mu_r)\bigr){\rm d}\mu ,\qquad s \in S ,
\end{align*}
which is a sum over all the residues of a meromorphic one-form on $\mathbb{C}P^1$. Hence, we deduce
\begin{equation}\label{eq:rplusv}
 {\bigl\langle \iota_\mu \iota_\lda r_{\ti{12}}, Z_{\ti{2}} \bigr\rangle}_{\ti{2}} = 0 .
\end{equation}
From \eqref{eq:rplusu} and \eqref{eq:rplusv}, we then have
 \begin{equation*}
 {\bigl\langle \iota_\mu \iota_\lda r_{\ti{12}}, X_{\ti{2}} \bigr\rangle}_{\ti{2}}=W = P_+(X) = R_+(X) .
 \end{equation*}
Since \smash{$X \in \Lg_D^{(0)}$} was arbitrary, this establishes the first relation in \eqref{eq:projrel}. Similarly, using the expansion \eqref{eq:expmur}, we find \smash{${\bigl\langle \iota_\lda \iota_\mu r_{\ti{12}}, W_{\ti{2}} \bigr\rangle}_{\ti{2}} = 0$} and \smash{${\bigl\langle \iota_\lda \iota_\mu r_{\ti{12}}, Z_{\ti{2}} \bigr\rangle}_{\ti{2}} =-Z$} from which we conclude that
\begin{equation*}
 {\bigl\langle \iota_\lda \iota_\mu r_{\ti{12}}, X_{\ti{2}} \bigr\rangle}_{\ti{2}}=-Z = -P_-(X) = R_-(X) .
\end{equation*}
Again, since \smash{$X \in \Lg_D^{(0)}$} was arbitrary, this establishes the second relation in \eqref{eq:projrel}.
\end{proof}

It follows from Proposition \ref{prop:projrel} that the kernel of the linear map $R = P_+ - P_-$ and that of the identity map $\text{Id} = P_+ + P_-$ (with respect to the bilinear pairing of Proposition \ref{prop:pairing}) are
\begin{equation*}
 (\iota_\mu \iota_\lda + \iota_\lda \iota_\mu) r_{\ti{12}}(\lda, \mu)
 \qquad \text{and} \qquad (\iota_\mu \iota_\lda - \iota_\lda \iota_\mu),
 r_{\ti{12}}(\lda, \mu),
\end{equation*}
respectively.

Recall that in Proposition \ref{prop:gddecomp} we identified the Lax matrix \eqref{Lax_matrix_Gaudin} of the cyclotomic Gaudin model, or rather its image under the embedding $\iota_\lambda$ in \eqref{iota map def}, as living in the subspace ${V_- = \iota_\lda \Omega_{D^\prime}^\Gamma}$ of our model \smash{$V = \Lg_D^{(1)}$} for the dual space $\mathbb{g}^\ast$.
The next proposition establishes that this sub\-spa\-ce~$V_- \subset V$ is the orthogonal complement of \smash{$\iota_\lda \mathcal{F}_{D^\prime}^\Gamma \subset \Lg_D^{(0)}$}, i.e., $\mathbb{g}_- \subset \mathbb{g}$, with respect to the~non\-degenerate bilinear pairing of Proposition \ref{prop:pairing}. This completes the proof of the claim that the decomposition $V = V_+ \dotplus V_-$ obtained in Proposition \ref{prop:gddecomp} is the one induced from the corresponding decomposition of the Lie algebra $\mathbb{g} = \mathbb{g}_+ \dotplus \mathbb{g}_-$.

\begin{Proposition}\label{prop:pairequivsp}
An element $Y \in \Lg_D^{(1)}$ lies in $\iota_\lda \Omega_{D^\prime}^\Gamma$ if and only if $\langle Y, X \rangle = 0$ for all $X \in \iota_\lda \mathcal{F}_{D^\prime}^\Gamma$.
Moreover, \smash{$Y \!\in \Lg_D^{(1)}$} lies in \smash{$\Lg_{D+}^{(1)}$} if and only if \smash{$\langle Y, X \rangle = 0$} for all \smash{$X\! \in \Lg_{D+}^{(0)}$}.
\end{Proposition}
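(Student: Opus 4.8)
The plan is to recognise both `if and only if' assertions as the single statement that, with respect to the pairing \eqref{eq:bmgd}, the subspace $V_- = \iota_\lda \Omega_{D^\prime}^\Gamma$ is the orthogonal complement of $\mathbb{g}_- = \iota_\lda \mathcal{F}_{D^\prime}^\Gamma$ and $V_+ = \Lg_{D+}^{(1)}$ is the orthogonal complement of $\mathbb{g}_+ = \Lg_{D+}^{(0)}$, both taken inside $V = \Lg_D^{(1)}$. I would first isolate the following purely algebraic reduction: once one knows (i) the two \emph{biorthogonality} relations $\langle \iota_\lda \Omega_{D^\prime}^\Gamma, \iota_\lda \mathcal{F}_{D^\prime}^\Gamma \rangle = 0$ and $\langle \Lg_{D+}^{(1)}, \Lg_{D+}^{(0)}\rangle = 0$, (ii) the direct sum decompositions \eqref{eq:gdzerodecomp}--\eqref{eq:gdonedecomp}, and (iii) the nondegeneracy of the pairing from Proposition~\ref{prop:pairing}, both orthogonality statements follow at once. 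Thus the bulk of the work is to verify the two biorthogonality relations, and the passage to full equalities is a short formal argument.

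For the first biorthogonality relation I would take $X = \iota_\lda f$ with $f \in \mathcal{F}_{D^\prime}^\Gamma$ and $Y = \iota_\lda g$ with $g \in \Omega_{D^\prime}^\Gamma$, so that each local factor $\Tr(Y_r X_r)$ is the Laurent expansion of the single rational function $\Tr(g(\lda) f(\lda))$ at $\zeta_r$ (writing $\zeta_0 = 0$, $\zeta_\infty = \infty$). The equivariance identity \eqref{eq:respropallpoles} turns the prefactor $T$ in \eqref{eq:bmgd} into a genuine sum over the $\Gamma$-orbit, $T\,\Res_{\lda = \zeta_r} = \sum_{k=0}^{T-1}\Res_{\lda = \omg^k \zeta_r}$, so that $\langle \iota_\lda g, \iota_\lda f\rangle$ becomes the sum of the residues of the rational one-form $\Tr(g f)\,\mathrm{d}\lda$ over \emph{all} its poles in $\CP$. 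Since this one-form is meromorphic on $\CP$, the residue theorem gives $\langle \iota_\lda g, \iota_\lda f\rangle = 0$; this is the same mechanism already used in the proof of Proposition~\ref{prop:projrel}. For the second biorthogonality relation I would compute the residues directly: for $r \in \{0,1,\dots,N\}$ the factors $Y_r$ and $X_r$ are both formal Taylor series in $\lda_r$, so $\Tr(Y_r X_r)\,\mathrm{d}\lda$ has no residue at $\lda_r = 0$; at $r = \infty$ the Jacobian $\mathrm{d}\lda = -\lda_\infty^{-2}\,\mathrm{d}\lda_\infty$ produces a double pole, but the defining condition that elements of $\Lg_{\infty+}^{\Gamma,1}$ carry no constant term in $\lda_\infty$ forces $\Tr(Y_\infty X_\infty)$ to vanish to order two, again killing the residue. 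Hence $\langle Y, X\rangle = 0$ for all $Y \in \Lg_{D+}^{(1)}$ and $X \in \Lg_{D+}^{(0)}$.

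With both biorthogonality relations in hand, I would upgrade them to equalities as follows. Suppose $\langle Y, X\rangle = 0$ for all $X \in \iota_\lda \mathcal{F}_{D^\prime}^\Gamma$, and decompose $Y = Y_+ + Y_-$ according to \eqref{eq:gdonedecomp} with $Y_+ \in \Lg_{D+}^{(1)}$ and $Y_- \in \iota_\lda \Omega_{D^\prime}^\Gamma$. The first biorthogonality relation gives $\langle Y_-, X\rangle = 0$ for $X \in \iota_\lda \mathcal{F}_{D^\prime}^\Gamma$, whence $\langle Y_+, X\rangle = 0$ there as well; combined with the second biorthogonality relation $\langle Y_+, \Lg_{D+}^{(0)}\rangle = 0$ and the decomposition \eqref{eq:gdzerodecomp}, this says $Y_+$ pairs trivially with all of $\Lg_D^{(0)}$, so $Y_+ = 0$ by nondegeneracy and $Y \in \iota_\lda \Omega_{D^\prime}^\Gamma$. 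The reverse inclusion is the first biorthogonality relation itself. An entirely symmetric argument, interchanging the roles of the two summands, proves the second assertion about $\Lg_{D+}^{(1)}$.

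The only step carrying genuine content is the first biorthogonality relation, and the main obstacle there is bookkeeping: one must check that the weight $T$ attached to the points $\zeta_1,\dots,\zeta_N$ and the distinguished treatment of the fixed points $0$ and $\infty$ in \eqref{eq:bmgd} conspire, via \eqref{eq:respropallpoles}, to reproduce exactly the full residue sum on $\CP$ with each pole counted once. Everything else --- the vanishing of residues of regular $\times$ regular products and the final descent from biorthogonality to orthogonal complements --- is formal and uses only the decompositions and the nondegeneracy established in Proposition~\ref{prop:pairing}.
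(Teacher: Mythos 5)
Your proof is correct, and its computational core---the residue-theorem argument showing $\bigl\langle \iota_\lda g, \iota_\lda f \bigr\rangle = 0$ by using the equivariance identity \eqref{eq:respropallpoles} to convert the weight $T$ in \eqref{eq:bmgd} into a sum over each full $\Gamma$-orbit---is exactly the paper's. Where you genuinely depart is in the converse direction. The paper decomposes $Y = \mathcal P_+(Y) + \iota_\lda g$, cancels the $\iota_\lda g$ terms using the forward direction, and then argues that a non-zero $\mathcal P_+(Y)$ must pair non-trivially with some test function $f \in \mathcal{F}_{D^\prime}^\Gamma$ ``adapted'' to it, re-running the explicit construction from the nondegeneracy proof of Proposition \ref{prop:pairing}; in effect it proves nondegeneracy of the \emph{restricted} pairing between \smash{$\Lg_{D+}^{(1)}$} and $\iota_\lda \mathcal{F}_{D^\prime}^\Gamma$. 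You instead prove the second biorthogonality relation \smash{$\bigl\langle \Lg_{D+}^{(1)}, \Lg_{D+}^{(0)} \bigr\rangle = 0$} outright---correctly: Taylor-times-Taylor products have no residue at the finite points, and at infinity the absence of constant terms in \emph{both} ``$+$'' factors makes the product $O\bigl(\lda_\infty^2\bigr)$, killing the residue despite the double pole of ${\rm d}\lda = -\lda_\infty^{-2}{\rm d}\lda_\infty$---and then conclude formally: $Y_+$ pairs to zero against both summands of \eqref{eq:gdzerodecomp}, hence against all of \smash{$\Lg_D^{(0)}$}, hence vanishes by the global nondegeneracy already established in Proposition \ref{prop:pairing}. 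Your route buys two things: it avoids constructing adapted test functions (a step the paper only sketches with ``along the same lines as the proof of Proposition \ref{prop:pairing}''), and it makes the ``moreover'' part follow by literally interchanging the roles of the two summands rather than by a separate analogous argument. The cost is having to verify the second biorthogonality relation explicitly, which the paper's route for the first statement never invokes; since that verification is a short residue count, your organisation is arguably the tighter of the two.
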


\begin{proof}
This is a particular case of the $\Gamma$-equivariant strong residue theorem \cite[Appendix A]{VY1}. We recall the proof here in the present setting for completeness.
 We start by proving that if $Y \in \iota_\lda \Omega_{D^\prime}^\Gamma$, then $\langle Y, X \rangle = 0$ for all $X \in \iota_\lda \mathcal{F}_{D^\prime}^\Gamma$. Let us pick elements $f \in \mathcal{F}_{D^\prime}^\Gamma$ and $g \in \Omega_{D^\prime}^\Gamma$, and let $X = \iota_\lda f$ and $Y = \iota_\lda g$. Since $g(\lambda) f(\lambda) {\rm d}\lambda$ is a meromorphic one-form on $\CP$, by the residue theorem we have
 \begin{equation*}
 \sum_{r=1}^N \sum_{k=0}^{T-1} \Res_{\lda_{r, k} = 0} \Tr\bigl(\iota_{\lda_{r, k}}g(\lda)\iota_{\lda_{r, k}}f(\lda) \bigr){\rm d}\lda + \sum_{r \in \{0, \infty\}} \Res_{\lda_r = 0} \Tr\bigl(\iota_{\lda_r}g(\lda)\iota_{\lda_r}f(\lda) \bigr){\rm d}\lda = 0 .
 \end{equation*}
Using the relation \eqref{eq:respropallpoles} valid for any $f \in \mathcal{F}_{D^\prime}^\Gamma$ and $g \in \Omega_{D^\prime}^\Gamma$, we may rewrite the first term on the left-hand side as $T$ times a sum over the residues at the points $\zeta_r$ for $r \in S$, so that
\begin{equation}\label{eq:orthspaceresone}
T \sum_{r=1}^N \Res_{\lda_r = 0} \Tr\bigl( \iota_{\lda_r} g(\lda)\iota_{\lda_r} f(\lda)\bigr){\rm d}\lda + \sum_{r \in \{0, \infty\}} \Res_{\lda_r = 0} \Tr\bigl( \iota_{\lda_r} g(\lda)\iota_{\lda_r} f(\lda)\bigr){\rm d}\lda = 0 .
\end{equation}
By definition of the bilinear pairing from Proposition \ref{prop:pairing}, we thus have $\langle Y, X \rangle = 0$, as desired.

Let us now prove the converse. Namely, let \smash{$Y \in \Lg_D^{(1)}$} be arbitrary and suppose that $\langle Y, X\rangle = 0$ for all $X \in \iota_\lda \mathcal{F}_{D^\prime}^\Gamma$. We must show that, in fact, $Y(\lda) \in \iota_\lda \Omega_{D^\prime}^\Gamma$. It follows from Proposition~\ref{prop:gddecomp} that \smash{$Y = (Y_0, Y_1, \dots, Y_N, Y_\infty) \in \Lg_D^{(1)}$} has a unique decomposition as a direct sum of the tuples \smash{$ \P_+(Y) = (\P_{0+}(Y_0), \P_{1+}(Y_1), \dots, \P_{N+}(Y_N), \P_{\infty+}(Y_\infty)) \in \Lg_{D+}^{(1)} $} and $ \P_-(Y) = \iota_\lda g$ for some ${g \in \Omega_{D^\prime}^\Gamma}$. Now, let $X = \iota_\lda f$ for some $f \in \mathcal{F}_{D^\prime}^\Gamma$. Since $\langle Y, X \rangle = 0$, writing this out explicitly means
\begin{align*}
 &T \sum_{r=1}^N \Res_{\lda_r = 0} \Tr\bigl( \P_{r+}(Y_r(\lda_r)) \iota_{\lda_r}f(\lda) \bigr){\rm d}\lda + \sum_{r \in \{0, \infty\}} \Res_{\lda_r = 0} \Tr\bigl( \P_{r+}(Y_r(\lda_r))\iota_{\lda_r}f(\lda)\bigr){\rm d}\lda \notag \\
 &\qquad{}+T \sum_{r=1}^N \Res_{\lda_r = 0} \Tr\bigl(\iota_{\lda_r}g(\lda)\iota_{\lda_r}f(\lda)\bigr){\rm d}\lda + \sum_{r \in \{0, \infty\}} \Res_{\lda_r = 0} \Tr\bigl(\iota_{\lda_r}g(\lda)\iota_{\lda_r}f(\lda)\bigr){\rm d}\lda = 0 .
\end{align*}
From \eqref{eq:orthspaceresone}, we have that the last two terms on the left-hand side vanish on their own. Therefore, we get
\begin{gather*}
 T \sum_{r=1}^N \Res_{\lda_r = 0} \Tr\bigl( \P_{r+}(Y_r(\lda_r))\iota_{\lda_r}f(\lda)\bigr){\rm d}\lda + \!\!\sum_{r \in \{0, \infty\}}\!\! \Res_{\lda_r = 0} \Tr\bigl( \P_{r+}(Y_r(\lda_r))\iota_{\lda_r}f(\lda)\bigr){\rm d}\lda = 0 .
\end{gather*}
One can then show, along the same lines as the argument in the proof of Proposition \ref{prop:pairing}, that if $\mathcal P_+(Y) \neq 0$ then by picking a suitable $f \in \mathcal{F}_{D^\prime}^\Gamma$ adapted to \smash{$\mathcal{P}_+(Y) \in \Lg_{D+}^{(1)}$}, one can ensure that the expression on the left-hand side is non-zero, which is a contradiction. Therefore, we conclude that $\mathcal P_+(Y) = 0$, and hence $Y \in \iota_\lambda \Omega_{D^\prime}^\Gamma$, as required.
The proof of the ``moreover'' part is completely analogous.
\end{proof}

The upshot of Proposition \ref{prop:pairequivsp} is that we are now exactly in the setting recalled in points~1 and~2 of Section~\ref{review_Lag_multi}. In particular, the coadjoint representation $\Ad^{R\ast} \colon \mathbb{G}_R \times V \to V$ of the group~$\mathbb{G}_R$ on our model \smash{$V = \Lg_D^{(1)}$} for the dual space $\mathbb{g}^\ast$ is given by formula \eqref{AdR}. This will be used below in Lemma \ref{identification_orbits} to give a concise description of the desired coadjoint orbit for the Lax matrix of the cyclotomic Gaudin model.

Before stating the lemma, we first give an explicit description of the infinite-dimensional Lie~group \smash{$\mathbb{G}_+\! = \LG_{D+}^{(0)}$} associated with the Lie algebra \smash{$\mathbb{g}_+\! = \Lg_{D+}^{(0)}$}. Its elements are of the form
\begin{equation} \label{varphi+ def}
 \varphi_+=( \varphi_{0+}, \varphi_{1+}, \dots, \varphi_{N+}, \varphi_{\infty+} ),
\end{equation}
where $\varphi_{r+}(\lda_r)$ is a Taylor series in the local parameter $\lda_r$ with values in $G$, the matrix Lie group associated with the Lie algebra $\g$,
\begin{align}
 & \varphi_{r+}(\lda_r)= \sum_{n=0}^\infty \phi_r^{(n)}\lda_r^n, \qquad r \neq \infty ,\nonumber\\
 & \varphi_{\infty+}(\lda_\infty)= \1 + \sum_{n=1}^\infty \phi_\infty^{(n)}\lda_\infty^n .\label{eq:groupelement}
\end{align}
The Lie group \smash{$\mathbb{G}_+ = \LG_{D+}^{(0)}$} has a natural action on \smash{$V = \Lg_{D}^{(1)}$} given by conjugation. This defines the coadjoint representation $\Ad^\ast \colon \mathbb{G}_+ \times V \to V$ from Section~\ref{review_Lag_multi} explicitly as
\begin{equation*}
(\varphi_+,Y) \longmapsto \Ad^\ast_{\varphi_+} Y=\bigl( \varphi_{0+}Y_0\varphi_{0+}^{-1}, \varphi_{1+}Y_1\varphi_{1+}^{-1}, \dots, \varphi_{N+}Y_N\varphi_{N+}^{-1}, \varphi_{\infty+}Y_\infty\varphi_{\infty+}^{-1} \bigr) .
\end{equation*}
We are now in a position to give an explicit description of the $\mathbb{G}_R$-coadjoint orbit where the Lax matrix of the cyclotomic Gaudin model lives and which will act as our phase space.
\begin{Lemma}\label{identification_orbits}
The orbit of the coadjoint action of $\mathbb{G}_R = \!\bigl(\LG_D^{(0)}\bigr)_R$ on an element ${\iota_\lda \Lambda \in V_- = \iota_\lda \Omega_{D^\prime}^\Gamma}$ has the explicit form
 \begin{equation*}
 \O_\Lambda = \bigl\{ \P_-\bigl(\Ad_{\varphi_+}^\ast \iota_\lda \Lambda \bigr) \mid \varphi_+ \in \LG_{D+}^{(0)} \bigr\} .
 \end{equation*}
\end{Lemma}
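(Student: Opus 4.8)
The plan is to derive the stated description directly from the general formula \eqref{AdR} for the coadjoint action of $\mathbb{G}_R$, combined with the fact, established in Proposition~\ref{prop:pairequivsp}, that the base point $\iota_\lda\Lambda$ lies in $V_-$. By the definition \eqref{eq:coadj_orbit} of a coadjoint orbit, a generic point of $\O_\Lambda$ has the form $\Ad^{R\ast}_\varphi(\iota_\lda\Lambda)$ with $\varphi = (\varphi_+, \varphi_-) \in \mathbb{G}_R$. Substituting $\xi = \iota_\lda\Lambda$ into \eqref{AdR} gives
\[ \Ad^{R\ast}_\varphi(\iota_\lda\Lambda) = \mathcal{P}_-\bigl(\Ad^\ast_{\varphi_+}\iota_\lda\Lambda\bigr) + \mathcal{P}_+\bigl(\Ad^\ast_{\varphi_-}\iota_\lda\Lambda\bigr), \]
so the whole lemma reduces to showing that the second summand vanishes for every $\varphi_- \in \mathbb{G}_-$.

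The key step is then to prove that $V_-$ is stable under the coadjoint action of the subgroup $\mathbb{G}_-$ associated with $\mathbb{g}_- = \iota_\lda\mathcal{F}_{D^\prime}^\Gamma$. First I would recall from Proposition~\ref{prop:pairequivsp} that $V_- = \iota_\lda\Omega_{D^\prime}^\Gamma$ is exactly the orthogonal complement $\mathbb{g}_-^\perp$ with respect to the pairing of Proposition~\ref{prop:pairing}, and from Proposition~\ref{prop:gddecomp} that $\mathbb{g}_-$ is a Lie subalgebra, whence its associated group $\mathbb{G}_-$ satisfies $\Ad_{\varphi_-^{-1}}X \in \mathbb{g}_-$ for all $X \in \mathbb{g}_-$. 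Using the invariance of the pairing in the form $\langle \Ad^\ast_{\varphi_-}Y, X\rangle = \langle Y, \Ad_{\varphi_-^{-1}}X\rangle$, it follows that for any $Y \in V_- = \mathbb{g}_-^\perp$ and any $X \in \mathbb{g}_-$ we have $\langle \Ad^\ast_{\varphi_-}Y, X\rangle = 0$, that is, $\Ad^\ast_{\varphi_-}Y \in \mathbb{g}_-^\perp = V_-$. In the explicit realisation this is simply the statement that component-wise conjugation by a $\Gamma$-equivariant $G$-valued function preserves both the equivariance and the pole data defining $\Omega_{D^\prime}^\Gamma$.

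Applying this stability with $Y = \iota_\lda\Lambda$ gives $\mathcal{P}_+(\Ad^\ast_{\varphi_-}\iota_\lda\Lambda) = 0$, so that $\Ad^{R\ast}_\varphi(\iota_\lda\Lambda) = \mathcal{P}_-(\Ad^\ast_{\varphi_+}\iota_\lda\Lambda)$ with no dependence on $\varphi_-$. Letting $\varphi$ range over $\mathbb{G}_R$ then yields the inclusion $\O_\Lambda \subseteq \bigl\{\mathcal{P}_-(\Ad^\ast_{\varphi_+}\iota_\lda\Lambda) \mid \varphi_+ \in \LG_{D+}^{(0)}\bigr\}$, while the reverse inclusion is immediate upon taking $\varphi_-$ to be the identity (recalling $\mathcal{P}_+(\iota_\lda\Lambda) = 0$ since $\iota_\lda\Lambda \in V_-$), giving the desired equality. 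I expect the only genuine obstacle to be the promotion of the Lie-algebra stability $\ad^\ast_{\mathbb{g}_-}V_- \subseteq V_-$ to the group level, together with checking that the adjoint action of the infinite-dimensional group $\mathbb{G}_-$ really preserves $\mathbb{g}_-$ and is compatible with the merely local identification $\mathbb{G}_R \simeq \mathbb{G}_+ \times \mathbb{G}_-$; in the concrete conjugation model of Section~\ref{sec:Lie dialgebra Gaudin} both facts are transparent.
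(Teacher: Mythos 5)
Your proof is correct and follows essentially the same route as the paper: apply formula \eqref{AdR} to $\xi = \iota_\lda\Lambda$ and kill the $\mathcal{P}_+\bigl(\Ad^\ast_{\varphi_-}\iota_\lda\Lambda\bigr)$ term by noting that $V_-$ is stable under the coadjoint action of $\mathbb{G}_-$. The only difference is that you spell out this stability (via Proposition~\ref{prop:pairequivsp}, the invariance of the pairing, and $\Ad_{\varphi_-^{-1}}\mathbb{g}_-\subseteq\mathbb{g}_-$), whereas the paper simply asserts it; your elaboration is sound and fills in exactly the step the paper leaves implicit.
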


\begin{proof}
By definition, the coadjoint orbit of $\bigl(\LG_D^{(0)}\bigr)_R$ on any $\iota_\lda \Lambda \in \iota_\lda \Omega_{D^\prime}^\Gamma$ is given by
\begin{equation*}
\O_\Lambda =\bigl\{ \Ad_{\varphi}^{R\ast} \iota_\lda \Lambda \mid \varphi \in \bigl(\LG_D^{(0)}\bigr)_R \bigr\} .
\end{equation*}
Using the explicit form \eqref{AdR} for the coadjoint action of $\mathbb{G}_R$, we have
\begin{equation*}
\Ad^{R\ast}_\varphi \iota_\lda \Lambda =\mathcal{P}_-\bigl(\Ad^\ast_{\varphi_+} \iota_\lda \Lambda\bigr)+\mathcal{P}_+\bigl(\Ad^\ast_{\varphi_-}\iota_\lda \Lambda\bigr) = \mathcal{P}_-\bigl(\Ad^\ast_{\varphi_+} \iota_\lda \Lambda\bigr),
\end{equation*}
where the last equality follows from the fact that $\iota_\lambda \Lambda \in V_- = \iota_\lambda \Omega_{D^\prime}^\Gamma$, so that $\Ad^\ast_{\varphi_-}\iota_\lda \Lambda \in V_-$ also and hence $\mathcal{P}_+\bigl(\Ad^\ast_{\varphi_-}\iota_\lda \Lambda\bigr) = 0$.
\end{proof}

It will be useful in practice to express the action of the projector $\P_-$ on $Y \in \Lg_{D}^{(1)}$ as
\begin{equation}
\label{useful}
 \P_-(Y) = \iota_\lda \circ \pi_\lda^{(1)} (Y) .
\end{equation}
In the remaining sections, we will put this setup to use to describe the Lax matrix of the cyclotomic Gaudin model \eqref{Lax_matrix_Gaudin} as a point in a coadjoint orbit $\O_\Lambda$ for some suitable $\Lambda \in \Omega_{D^\prime}^\Gamma$ and then introduce the ingredients from point 3 in Section~\ref{review_Lag_multi} to derive the associated Lax equations.

\subsection{Lax matrix}\label{sec:cgmlaxm}
Our algebraic setup covers the case of the cyclotomic Gaudin model with arbitrary multiplicities. However, as mentioned at the start of this section, from now on we will restrict to the case with simple poles at all $\omg^k \zeta_r$, $k \in \{0, \dots, T-1\}$, $r \in \{1, \dots, N\}$, and double poles at the origin and at infinity, since this is the setting required for our examples in Section~\ref{sec:realisations}. The discussion in the remaining sections is easily generalised to the case of arbitrary multiplicities.

To describe a coadjoint orbit $\O_\Lambda \in \iota_\lda \Omega_{D^\prime}^\Gamma$ from Lemma \ref{identification_orbits} corresponding to the Lax matrix~\eqref{Lax_matrix_Gaudin} of the cyclotomic Gaudin model, we fix a non-dynamical element $\Lambda \in \Omega_{D^\prime}^\Gamma$ with the same pole structure as \eqref{Lax_matrix_Gaudin}, namely we introduce
\begin{equation*}
 \Lambda(\lda) = \frac{\Lambda_0^{(0)}}{\lambda} + \frac{\Lambda_0^{(1)}}{\lambda^2} + \frac{1}{T} \sum_{r=1}^{N} \sum_{k=0}^{T-1} \frac{\sgm^k \Lambda_r}{\lda - \omg^k \zeta_r} + \Lambda_\infty \in \Omega_{D^\prime}^\Gamma .
\end{equation*}
According to Lemma \ref{identification_orbits} and formula \eqref{useful}, the corresponding coadjoint orbit $\O_\Lambda$ then consists of elements of the form
\begin{align}
 & \P_-\Biggl(\varphi_+ \iota_\lda \Biggl( \frac{\Lambda_0^{(0)}}{\lambda} + \frac{\Lambda_0^{(1)}}{\lambda^2} + \frac{1}{T} \sum_{r=1}^{N} \sum_{k=0}^{T-1} \frac{\sgm^k \Lambda_r}{\lda - \omg^k \zeta_r} + \Lambda_\infty \Biggr) \varphi_+^{-1}\Biggr) \notag \\
 &\qquad{}= \iota_\lda \circ \pi_\lda^{(1)} \Biggl( \frac{A_0^{(0)}}{\lda} + \frac{A_0^{(1)}}{\lda^2}, \frac{1}{T} \frac{A_1}{\lda - \zeta_1}, \dots, \frac{1}{T} \frac{A_N}{\lda - \zeta_N}, A_\infty \Biggr) \notag \\
 &\qquad{}= \iota_\lda \Biggl( \frac{A_0^{(0)}}{\lda} + \frac{A_0^{(1)}}{\lda^2} + \frac{1}{T} \sum_{r=1}^{T-1} \sum_{k=0}^{T-1} \frac{ \sgm^k A_r }{\lda - \omg^k \zeta_r} + A_\infty \Biggr)\notag\\
 &\qquad{}\equiv\iota_\lda L(\lda) ,\label{eq:cglaxembed}
\end{align}
where, recalling the definitions \eqref{varphi+ def} and \eqref{eq:groupelement}, we have set
\begin{align}
 &A_0^{(0)} = \phi_0^{(0)} \Lambda_0^{(0)} \phi_0^{(0)\, -1} + \bigl[ \phi_0^{(1)} \phi_0^{(0)\, -1}, \phi_0^{(0)} \Lambda_0^{(1)} \phi_0^{(0)\, -1} \bigr] , \nonumber\\
 &A_0^{(1)} = \phi_0^{(0)} \Lambda_0^{(1)} \phi_0^{(0)\, -1} , \nonumber\\
 &A_r = \phi_r \Lambda_r \phi_r^{-1}, \qquad r \in \{1, \dots, N \} , \nonumber\\
 &A_\infty = \Lambda_\infty ,\label{eq:cgcoadorb}
\end{align}
with $\phi_r^{(0)}$ denoted by $\phi_r$, for $r \in \{1, \dots, N \}$, to simplify notations. Notice that we have the~relations \smash{$\sgm\phi_{0}^{(0)} = \phi_{0}^{(0)}$} and \smash{$\sgm\phi_{0}^{(1)} = \omg\phi_{0}^{(1)}$} for the field elements which ensure that \smash{$A_{0}^{(0)} \in \g^{(0)}$} and \smash{$A_{0}^{(1)} \in \g^{(-1)}$}. This gives us the desired parameterisation of the cyclotomic Gaudin Lax matrix
\begin{equation}\label{eq:cglm}
 L(\lda) = \frac{A_0^{(0)}}{\lda} + \frac{A_0^{(1)}}{\lda^2} + \frac{1}{T} \sum_{r=1}^{N} \sum_{k=0}^{T-1} \frac{ \sgm^k A_r }{\lda - \omg^k \zeta_r} + A_\infty
\end{equation}
viewed as an element of the coadjoint orbit $\O_\Lambda$.

\subsection{Lax equations}\label{sec:cgmlaxeq}
To derive Lax equations for the Lax matrix \eqref{eq:cglm} of the cyclotomic Gaudin model, let us return to the hierarchy of Lax equations \eqref{hierarchy_Lax} induced by the family of Hamiltonians in involution with respect to $\{\, ,\,\}_R$. In our current setup, $\{\, ,\,\}_R$ is the Lie--Poisson bracket on \smash{$\Lg_D^{(1)}$} associated with the linear map $R = P_+ - P_-$. Invariant functions on \smash{$\Lg_D^{(1)}$} take the form
\begin{equation}\label{eq:invfunc}
 H_{p, r} \colon \ Y \in \Lg_D^{(1)} \longmapsto \Res_{\lda_r = 0} \bigl(\ell_{p, r}(\lda_r) \Tr\bigl( Y_r(\lda_r)^{p+1} \bigr) \bigr){\rm d}\lda, \qquad p \geq 1, \qquad r \in S ,
\end{equation}
where $\ell_{p, r}(\lda_r) \in \mathbb{C}((\lda_r))$ is a collection of Laurent polynomials for $p \geq 1$ and $r \in S$. It follows from Proposition \ref{prop:pairing} that for these functions to be non-trivial, the Laurent polynomials $\ell_{p, r}(\lda_r)$ for $r \in \{0, \infty\}$ should be chosen such that \smash{$\ell_{p, r}(\lda_r) Y_r(\lda_r)^p \in \Lg_r^{\Gamma, 0}$}, while $\ell_{p, r}(\lda_r)$ for $r \in \{1, \dots, N \}$ can be any Laurent polynomials. Let us then choose
\begin{gather}
 \ell_{p, r}(\lambda_r) = \iota_{\lda_r} \frac{\lda^p}{p+1} \qquad \text{for}\quad r \in \{0, \infty\}, \nonumber\\
 \ell_{p, r}(\lambda_r) = \iota_{\lda_r} \frac{T \lda^p}{p+1} \qquad \text{for}\quad r \in \{1, \dots, N \} .\label{ell p r choice}
\end{gather}
The restriction of the functions $H_{p, r}$ to $\iota_\lda L$ are Hamiltonians (in involution) of the model and generate the elementary (pairwise-commuting) flows $\partial_{t_p^r}$. However, by virtue of the choice \eqref{ell p r choice} we made, it follows from \eqref{eq:respropallpoles} that $\sum_{r \in S} H_{p,r}(\iota_\lda L) = 0$ by the residue theorem, for each $p \geq 1$.

In what follows, it will thus be sufficient to focus on the Hamiltonians $H_{p, r}$ for $r \neq \infty$ and look at the associated equations of motion they produce through \eqref{hierarchy_Lax}.
Using $R_\pm = \frac{1}{2}(R \pm \text{id})$, these equations can be rewritten as
\begin{equation}\label{eq:laxeqdialg}
 \partial_{t_p^r} \iota_\lda L = \bigl[ R_\pm \nabla H_{p, r}(\iota_\lda L), \iota_\lda L \bigr] .
\end{equation}
As it suffices to calculate only one of the two expressions $R_+ \nabla H_{p, r}(\iota_\lda L)$ and $R_- \nabla H_{p, r}(\iota_\lda L)$, let us compute the latter. The gradient of $H_{p, r}(\iota_\lda L)$ is an element of \smash{\raisebox{-0.3pt}{$\Lg_D^{(0)}$}} and satisfies
\begin{equation} \label{eq:gradham}
 \lim_{\epsilon \rightarrow 0} \frac{H_{p, r}(\iota_\lda L + \epsilon \eta) - H_{p, r}(\iota_\lda L)}{\epsilon} = \langle \eta, \nabla H_{p, r} (\iota_\lda L) \rangle
\end{equation}
for all \smash{$\eta \in \Lg_D^{(1)}$}. Using Proposition \ref{prop:gddecomp}, we may decompose this gradient as follows:
\begin{equation*}
 \nabla H_{p, r} (\iota_\lda L) = N_r^{(p)} + \iota_\lda h_r^{(p)} , \qquad N_r^{(p)} \in \Lg_{D+}^{(0)} , \qquad h_r^{(p)} \in \mathcal{F}_{D^\prime}^\Gamma ,
\end{equation*}
and rewrite \eqref{eq:laxeqdialg} as
\begin{equation*}
 \partial_{t_p^r} \iota_\lda L = [ R_- \nabla H_{p, r}(\iota_\lda L), \iota_\lda L ] = - [ P_-( \nabla H_{p, r} (\iota_\lda L) ), \iota_\lda L ] = - \bigl[\iota_\lda h_r^{(p)}, \iota_\lda L \bigr] .
\end{equation*}
Since $\iota_\lda$ is an embedding which also clearly commutes with the Lie brackets $[\cdot, \cdot] \colon \mathcal{F}_{D^\prime}^\Gamma \times \Omega_{D^\prime}^\Gamma \to \Omega_{D^\prime}^\Gamma$ and \smash{$[\cdot, \cdot] \colon \Lg_D^{(0)} \times \Lg_D^{(1)} \to \Lg_D^{(1)}$}, the above equation implies
\begin{equation}\label{eq:laxeqdialgone}
 \partial_{t_p^r} L = - \bigl[ h_r^{(p)}, L \bigr] .
\end{equation}
In order to calculate \smash{$h_r^{(p)} \in \mathcal{F}_{D^\prime}^\Gamma$}, by virtue of Proposition \ref{prop:pairequivsp}, it is sufficient to restrict $\eta$ in \eqref{eq:gradham} to live in \smash{$\Lg_{D+}^{(1)}$}. We then have
\begin{equation*}
 \bigl\langle \eta, N_r^{(p)} \bigr\rangle = 0\qquad \text{for all}\quad \eta \in \Lg_{D+}^{(1)} ,\quad N_r^{(p)} \in \Lg_{D+}^{(0)} ,
\end{equation*}
and the right-hand side of \eqref{eq:gradham} becomes
\begin{equation*}
 T \sum_{s=1}^N \Res_{\lda_s = 0} \Tr\bigl(\eta_s(\lambda_s) \iota_{\lda_s} h_r^{(p)}\bigr) + \sum_{s \in \{0, \infty\}} \Res_{\lda_s = 0} \Tr\bigl(\eta_s(\lambda_s) \iota_{\lda_s} h_r^{(p)}\bigr) ,
\end{equation*}
while for the left-hand side we find
\begin{equation*}
 \lim_{\epsilon \rightarrow 0} \frac{H_{p, r}(\iota_\lda L + \epsilon \eta) - H_{p, r}(\iota_\lda L)}{\epsilon} = (p+1) \Res_{\lda_r = 0} \Tr\bigl(\ell_{p, r}(\lambda_r)\eta_r(\lambda_r) \iota_{\lda_r} L^p\bigr) , \qquad r \in S ,
\end{equation*}
for any $\eta_s(\lambda_s) \in \Lg_{s+}$, $s \in \{1, \dots, N\}$, and \smash{$\eta_s(\lambda_s) \in \Lg_{s+}^{\Gamma, 1}$}, $s \in \{0, \infty\}$. By definition of $\ell_{p,r}(\lambda_r)$ in~\eqref{ell p r choice}, this implies
\begin{equation*}
 \bigl(\iota_{\lda_s} h_r^{(p)}\bigr)_- = \begin{dcases}
 0 &\text{for}\ s \neq r,\\
 \bigl( \iota_{\lda_r} \lda^p \iota_{\lda_r} L^p \bigr)_- &\text{for}\ s = r \end{dcases}
\end{equation*}
for all $r \in S$, where $X_-$ denotes the principal part of a Laurent series $X$. Plugging the equivariant functions $h_r^{(p)}$ obtained from the above conditions into \eqref{eq:laxeqdialgone} gives a hierarchy of Lax equations corresponding to our choice of invariant functions $H_{p, r}$ in \eqref{eq:invfunc}.

Explicitly, for $p = 1$, we get the Lax equations
\begin{equation}\label{eq:laxeqfirstflow}
 \partial_{t_1^r} L = - \bigl[ h_r^{(1)}, L \bigr] \qquad \text{with} \quad h_r^{(1)} =
 \begin{dcases}
 \dfrac{A_0^{(1)}}{\lda} &\text{for}\ r = 0,\\
 \dfrac{1}{T} \sum_{k=0}^{T-1} \dfrac{\omg^k \zeta_r \sgm^k A_r}{\lda - \omg^k \zeta_r} & \text{for}\ r \in \{1, \dots, N \} .
 \end{dcases}
\end{equation}

Finally, we turn to the primary goal of this paper which is to give a Lagrangian multiform for the cyclotomic Gaudin model that will provide a variational description of the hierarchy of Lax equations in \eqref{eq:laxeqdialgone}. Having just described the cyclotomic Gaudin model within the Lie dialgebra framework, we now have all the necessary ingredients to achieve this goal: the non-dynamical~element $\Lambda \in \Omega_{D^\prime}^\Gamma$ that fixes the phase space, the field element \smash{$\varphi_+ \in \LG_{D+}^{(0)}$} containing~the~dynamical degrees of freedom, the linear map $R$ that equips the phase space with the required Poisson structure, and the invariant functions $H_{p, r}$ which induce non-trivial equations of motion with respect to this Poisson structure.

\subsection{Lagrangian multiform for the cyclotomic Gaudin hierarchy}\label{sec:cgmultiform}
We can now define a Lagrangian $1$-form on the coadjoint orbit ${\cal O}_\Lambda$ of the cyclotomic Gaudin model as
\begin{equation*}
 \Lag = \sum_{p=1}^{M} \sum_{r \in S \setminus \{\infty\}} \Lag_{p, r} \, \mathrm{d}t_p^r,
\end{equation*}
using the expression \eqref{our_Lag} for the Lagrangian coefficients. It will be useful to recall the notations in \eqref{eq:cgcoadorb} associated with the parameterisation of the cyclotomic Gaudin Lax matrix. The~elementary times $t_k$ that appear in \eqref{our_Lag} are now naturally labelled by a pair of indices~${p \geq 1}$ and~${r \in S}$, namely we now have elementary times $t_p^r$, associated with the corresponding Hamiltonians~\eqref{eq:invfunc}. Explicitly, we have the following.

\begin{Theorem}\label{th:cgmultiform}
The Lagrangian coefficients of the cyclotomic Gaudin multiform $\Lag$ take the~form%
\begin{equation}\label{eq:cglagcoeff}
 \Lag_{p, r} = \sum_{s = 1}^N \Tr \bigl( A_s \partial_{t_p^r} \phi_s \phi_s^{-1} \bigr) + \Tr \bigl( A_0^{(0)} \partial_{t_p^r} \phi_0^{(0)} \phi_0^{(0)\, -1} \bigr) - H_{p, r}(\iota_\lda L) ,
\end{equation}
with the potential part
\begin{equation}\label{eq:cgpot}
 H_{p, r}(\iota_\lda L) = \Res_{\lda_r = 0} \bigl( \ell_{p, r}(\lda_r) \Tr\bigl(\iota_{\lda_r}L^{p+1}\bigr) \bigr)d\lda, \qquad r \in \{0, 1, \dots, N\} ,
\end{equation}
where $\ell_{p, r}(\lda_r)$ are the Laurent polynomials
\begin{equation*}
 \ell_{p, 0} = \iota_{\lda_0} \frac{\lda^p}{p+1} \qquad \text{and} \qquad \ell_{p, r} = \iota_{\lda_r} \frac{T \lda^p}{p+1} \qquad \text{for}\quad r \in \{1, \dots, N \} .
\end{equation*}
The Lagrangian $1$-form $\Lag$ satisfies the corner equations \eqref{simple_multitime_EL2}--\eqref{simple_multitime_EL3} of the multi-time Euler--Lagrange equations, while the standard Euler--Lagrange equations for $\Lag_{p, r}$ give the hierarchy of Lax equations in \eqref{eq:laxeqdialgone}. Further, on solutions of \eqref{eq:laxeqdialgone}, we have the closure relation
\begin{equation*}
 \partial_{t_q^s}\Lag_{p, r}-\partial_{t_p^r}\Lag_{q, s} = 0
\end{equation*}
for all possible combinations of $(p, r)$ and $(q, s)$ in $\ZZ_{\geq 1} \times S$.
\end{Theorem}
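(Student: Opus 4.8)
The plan is to recognise Theorem~\ref{th:cgmultiform} as an instance of the general result Theorem~\ref{Th_multi_EL}, specialised to the Lie dialgebra data built in this section, and then to render the abstract Lagrangian \eqref{our_Lag} explicit. The three structural assertions --- the corner equations \eqref{simple_multitime_EL2}--\eqref{simple_multitime_EL3}, the identification of the standard Euler--Lagrange equations with the Lax hierarchy \eqref{eq:laxeqdialgone}, and the closure relation --- will then require no separate argument: Propositions~\ref{prop:pairing}, \ref{prop:gddecomp} and \ref{prop:pairequivsp}, together with Lemma~\ref{identification_orbits}, already supply every hypothesis of Theorem~\ref{Th_multi_EL}, namely the Lie algebra $\mathbb{g} = \Lg_D^{(0)}$ with its nondegenerate invariant pairing against $V = \Lg_D^{(1)}$, the decomposition into complementary subalgebras $\mathbb{g}_\pm$ inducing the orthogonal decomposition $V_\pm$, and the $\Ad^\ast$-invariant functions $H_{p,r}$ generating \eqref{eq:laxeqdialgone} (established in Section~\ref{sec:cgmlaxeq}). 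So I would simply invoke Theorem~\ref{Th_multi_EL} for these three points and devote the real work to the coefficient formula \eqref{eq:cglagcoeff}.

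For that, I would set $L = \iota_\lda L(\lda)$ as in \eqref{eq:cglm} and, by Lemma~\ref{identification_orbits}, parameterise the orbit by $\varphi_+ \in \LG_{D+}^{(0)}$ alone, so that $\partial_{t_p^r}\varphi \cdot_R \varphi^{-1}$ reduces to the ordinary Maurer--Cartan form $\partial_{t_p^r}\varphi_+ \cdot \varphi_+^{-1} \in \mathbb{g}_+$. Since $V_+ = \Lg_{D+}^{(1)}$ is orthogonal to $\mathbb{g}_+$ by the ``moreover'' part of Proposition~\ref{prop:pairequivsp}, the pairing in \eqref{our_Lag} reduces to $\bigl\langle \iota_\lda L, \partial_{t_p^r}\varphi_+ \cdot \varphi_+^{-1}\bigr\rangle$, which I would then evaluate residue by residue using \eqref{eq:bmgd} and the group parameterisation \eqref{varphi+ def}--\eqref{eq:groupelement}. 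At each $\zeta_r$, $r \in \{1,\dots,N\}$, the simple-pole part $\tfrac1T A_r/\lda_r$ of $L$ pairs with $\partial_{t_p^r}\varphi_{r+}\varphi_{r+}^{-1} = \partial_{t_p^r}\phi_r\,\phi_r^{-1} + O(\lda_r)$, and the factor $T$ in \eqref{eq:bmgd} cancels the $\tfrac1T$ to give $\sum_{s=1}^N \Tr\bigl(A_s\,\partial_{t_p^r}\phi_s\,\phi_s^{-1}\bigr)$; the simple-pole part $A_0^{(0)}/\lda$ at the origin likewise yields $\Tr\bigl(A_0^{(0)}\,\partial_{t_p^r}\phi_0^{(0)}\,\phi_0^{(0)\,-1}\bigr)$. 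Together with $-H_{p,r}(\iota_\lda L)$ from \eqref{eq:cgpot}, these reproduce \eqref{eq:cglagcoeff}.

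The hard part will be the \emph{extra} contributions coming from the double pole at the origin and from the point at infinity, which must be shown not to affect the multiform. With $\varphi_{0+} = \phi_0^{(0)} + \phi_0^{(1)}\lda + \cdots$, the $A_0^{(1)}/\lda^2$ term of $L$ pairs with the order-$\lda$ coefficient of $\partial_{t_p^r}\varphi_{0+}\varphi_{0+}^{-1}$; using $A_0^{(1)} = \phi_0^{(0)}\Lambda_0^{(1)}\phi_0^{(0)\,-1}$ and cyclicity of the trace, I expect this to collapse to a total derivative,
\begin{equation*}
 \Tr\bigl(\Lambda_0^{(1)}\phi_0^{(0)\,-1}\partial_{t_p^r}\phi_0^{(1)}\bigr) - \Tr\bigl(\Lambda_0^{(1)}\phi_0^{(0)\,-1}\partial_{t_p^r}\phi_0^{(0)}\,\phi_0^{(0)\,-1}\phi_0^{(1)}\bigr) = \partial_{t_p^r}\Tr\bigl(\Lambda_0^{(1)}\phi_0^{(0)\,-1}\phi_0^{(1)}\bigr) ,
\end{equation*}
and, since $A_\infty = \Lambda_\infty$ is non-dynamical and $\varphi_{\infty+} = \1 + \phi_\infty^{(1)}\lda_\infty + \cdots$, the contribution at infinity to be $-\partial_{t_p^r}\Tr\bigl(\Lambda_\infty\phi_\infty^{(1)}\bigr)$. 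The key observation is that the \emph{same} field-dependent function $F = \Tr\bigl(\Lambda_0^{(1)}\phi_0^{(0)\,-1}\phi_0^{(1)}\bigr) - \Tr\bigl(\Lambda_\infty\phi_\infty^{(1)}\bigr)$ occurs in every coefficient $\Lag_{p,r}$ with an overall $\partial_{t_p^r}$, so the total unwanted piece of the one-form is $\sum_{p,r}\partial_{t_p^r}F\,\mathrm{d}t_p^r = \mathrm{d}F$, which is exact and therefore alters neither $\mathrm{d}\Lag$ nor the multi-time Euler--Lagrange equations. I would present \eqref{eq:cglagcoeff} as the representative obtained after discarding this exact form. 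Verifying the collapse into $\mathrm{d}F$ --- and confirming that no residue is missed at the higher-order poles --- is the step demanding the most care.
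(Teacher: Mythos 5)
Your proposal is correct and follows essentially the same route as the paper's proof: the three structural claims are delegated to Theorem~\ref{Th_multi_EL} exactly as in the paper, and the coefficient formula \eqref{eq:cglagcoeff} is obtained by evaluating the kinetic term of \eqref{our_Lag} residue by residue and discarding the origin and infinity contributions as a total horizontal differential $\mathrm{d}F$ --- your exact-form argument is precisely the paper's step, and your origin term $\partial_{t_p^r}\Tr\bigl(\Lambda_0^{(1)}\phi_0^{(0)\,-1}\phi_0^{(1)}\bigr)$ coincides with the paper's $\partial_{t_p^r}\Tr\bigl(\phi_0^{(1)}\phi_0^{(0)\,-1}A_0^{(1)}\bigr)$ upon inserting $A_0^{(1)} = \phi_0^{(0)}\Lambda_0^{(1)}\phi_0^{(0)\,-1}$. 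The only computational difference is that the paper first converts $\bigl\langle \iota_\lda L, \partial_{t_p^r}\varphi_+\varphi_+^{-1}\bigr\rangle$ into $\bigl\langle \iota_\lda \Lambda, \varphi_+^{-1}\partial_{t_p^r}\varphi_+\bigr\rangle$ (using $\iota_\lda L = \P_-\bigl(\Ad^\ast_{\varphi_+}\iota_\lda\Lambda\bigr)$, the orthogonality $\langle V_+,\mathbb{g}_+\rangle = 0$, and invariance of the pairing) so that everything is paired against the non-dynamical $\Lambda$, whereas you expand $\iota_\lda L$ directly; both work. One justification in your write-up needs repair: dropping the $\partial_{t_p^r}\varphi_-\varphi_-^{-1}$ piece of $\partial_{t_p^r}\varphi\cdot_R\varphi^{-1}$ is \emph{not} a consequence of the orbit parameterisation in Lemma~\ref{identification_orbits}, nor of the ``moreover'' part of Proposition~\ref{prop:pairequivsp}; it follows from the \emph{main} part of that proposition, since $\iota_\lda L \in V_- = \iota_\lda\Omega_{D^\prime}^\Gamma$ is orthogonal to $\mathbb{g}_- = \iota_\lda\mathcal{F}_{D^\prime}^\Gamma$ while $\partial_{t_p^r}\varphi_-\varphi_-^{-1}\in\mathbb{g}_-$ (the ``moreover'' part, $V_+\perp\mathbb{g}_+$, is instead what permits dropping the projector $\P_-$ when pairing against $\mathbb{g}_+$, a step your direct expansion bypasses). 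Finally, your total derivative at infinity, $-\partial_{t_p^r}\Tr\bigl(\Lambda_\infty\phi_\infty^{(1)}\bigr)$, is what a direct expansion of \eqref{eq:bmgd} with $\mathrm{d}\lda = -\lda_\infty^{-2}\mathrm{d}\lda_\infty$ produces and differs from the expression $\frac{1}{2}\partial_{t_p^r}\Tr\bigl(A_\infty\bigl(\phi_\infty^{(1)}\bigr)^2\bigr)$ recorded in the paper; since both are exact, this discrepancy has no bearing on \eqref{eq:cglagcoeff} or on the theorem.
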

\begin{proof}
Let us start by reinterpreting the formula in \eqref{our_Lag} in the present context of the cyclotomic Gaudin model. First, note that on the coadjoint orbit $\mathcal{O}_\Lambda$, where the Lagrangian $1$-form $\Lag$ lives, the role of the Lax matrix is played by the image of $L$ in \smash{$\Lg_{D}^{(1)}$} under the embedding $\iota_\lda$, given by \eqref{eq:cglaxembed}. Next, the bilinear pairing used to define the kinetic part is the~one constructed in Proposition \ref{prop:pairing}. Furthermore, in the Adler--Kostant--Symes scheme where (locally) $\mathbb{G}_R \simeq \mathbb{G}_+ \times \mathbb{G}_-$, we explicitly have \smash{$\partial_{t_p^r} \varphi \cdot_R \varphi^{-1} = \partial_{t_p^r} \varphi_+ \varphi_+^{-1} + \partial_{t_p^r} \varphi_- \varphi_-^{-1}$}, and since \smash{$\iota_\lambda L \in V_- = \iota_\lambda \Omega_{D^\prime}^\Gamma$}, it follows from Proposition \ref{prop:pairequivsp} that only the \smash{$\partial_{t_p^r} \varphi_+ \varphi_+^{-1}$} piece contributes to the kinetic term. Finally, the potential part is simply the restriction of the invariant functions~$H_{p, r}$ in \eqref{eq:invfunc} to $\iota_\lda L$. Therefore, in the present setup, the Lagrangian coefficients in~\eqref{our_Lag} can be expressed as
\begin{align}
 \Lag_{p, r} &{}= \bigl\langle \iota_\lda L, \partial_{t_p^r} \varphi_+ \varphi_+^{-1} \bigr\rangle - H_{p, r}(\iota_\lda L) \notag \\
 &{}= \bigl\langle \iota_\lda \Lambda, \varphi_+^{-1} \partial_{t_p^r} \varphi_+ \bigr\rangle - H_{p, r}(\iota_\lda L), \qquad r \in \{0, 1, \dots, N\},\label{eq:cglagcoeffinit}
\end{align}
where in the second step we used the fact \eqref{eq:cglaxembed} that $\iota_\lambda L = \mathcal P_-\bigl( \varphi_+ (\iota_\lambda \Lambda) \varphi_+^{-1} \bigr)$. The kinetic term can be written out more explicitly in terms of the component fields \eqref{eq:groupelement} as
\begin{align*}
 \bigl\langle \iota_\lda \Lambda, \varphi_+^{-1} \partial_{t_p^r} \varphi_+ \bigr\rangle &{}= T \sum_{s=1}^N \Res_{\lda_s = 0} \Tr \bigl( \iota_{\lda_s} \Lambda \varphi_{s+}(\lda_s)^{-1} \partial_{t_p^r} \varphi_{s+}(\lda_s) \bigr){\rm d}\lda \notag \\
 &\quad {}+ \sum_{s \in \{0, \infty\} } \Res_{\lda_s = 0} \Tr \bigl( \iota_{\lda_s} \Lambda \varphi_{s+}(\lda_s)^{-1} \partial_{t_p^r} \varphi_{s+}(\lda_s) \bigr){\rm d}\lda \notag \\
 &{}= \sum_{s = 1}^N \Tr \bigl( A_s \partial_{t_p^r} \phi_s \phi_s^{-1} \bigr) + \Tr \bigl( A_0^{(0)} \partial_{t_p^r} \phi_0^{(0)} \phi_0^{(0)\, -1} \bigr) \notag \\
 &\quad {}+ \partial_{t_p^r} \Tr \bigl( \phi_0^{(1)} \phi_0^{(0)\, -1} A_0^{(1)} \bigr) + \frac{1}{2} \partial_{t_p^r} \Tr \bigl(A_\infty \bigl(\phi_\infty^{(1)}\bigr)^2\bigr) .
\end{align*}
The last two terms associated to the poles at the origin and at infinity are total derivatives and do not contribute to the multi-time Euler--Lagrange equations since dropping them amounts to~changing the Lagrangian $1$-form $\Lag$ by $\mathrm{d} \Tr\bigl( \phi_0^{(1)} \phi_0^{(0)\, -1} A_0^{(1)} + \frac{1}{2} A_\infty \bigl(\phi_\infty^{(1)}\bigr)^2 \bigr)$ which is a~total horizontal differential. Discarding these two terms, we are left with the required expression for the Lagrangian coefficients.

Since the Lagrangian coefficients $\Lag_{p, r}$ in \eqref{eq:cglagcoeffinit} are of the form \eqref{our_Lag}, it follows directly from Theorem \ref{Th_multi_EL} that the Lagrangian $1$-form $\Lag$ satisfies all the required conditions and the closure relation.
\end{proof}

To close this section, let us present the explicit expressions for the first set of Lagrangian coefficients $\Lag_{1, r}$ with $r \in \{0, 1, \dots, N \}$. The kinetic terms are obtained by simply substituting $p=1$ in the kinetic part in \eqref{eq:cglagcoeff}, while the potential terms defined by \eqref{eq:cgpot} read
\begin{gather*}
 H_{1, 0} = \frac{1}{2} \Tr \bigl( A_0^{(0)\, 2} \bigr) - \sum_{r=1}^N \frac{\Tr \bigl( A_0^{(1)} A_r \bigr)}{\zeta_r} + \Tr \bigl( A_0^{(1)}A_\infty \bigr),\\
 H_{1, r} = \Tr \bigl( A_0^{(0)}A_r \bigr) + \frac{\Tr\bigl( A_0^{(1)}A_r \bigr)}{\zeta_r} + \frac{1}{2T} \sum_{k=0}^{T-1} \Tr\bigl( A_r \sgm^k A_r \bigr) \\
 \hphantom{H_{1, r} =}{}
 + \frac{1}{T} \sum_{s \neq r} \sum_{k=0}^{T-1} \frac{\Tr\bigl( A_r \sgm^k A_s \bigr) \zeta_r}{\zeta_r - \omg^k \zeta_s} + \Tr(A_r A_\infty) \zeta_r, \qquad r \in \{1, \dots, N\} .
\end{gather*}
Upon varying $\Lag_{1, r}$ with respect to $\phi_s$, $s = 1, \dots, N$, $\phi_0^{(0)}$, and $\phi_0^{(1)}$, it can be checked that the associated Euler--Lagrange equations correspond to the set of Lax equations for $p=1$ in \eqref{eq:laxeqfirstflow}, as it should be from Theorem \ref{th:cgmultiform}.

\section{Realisations of the cyclotomic Gaudin model}\label{sec:realisations}

In this section, we study two different realisations of the cyclotomic Gaudin model -- the periodic Toda chain and the discrete self-trapping (DST) model -- with the objective of describing their corresponding hierarchies variationally. We will then go on to show how our framework allows for a straightforward coupling of integrable hierarchies, by using the approach devised in~\mbox{\cite[Section~7]{CSV}} to couple together hierarchies of integrable field theories.

In what follows, we will work with the cyclotomic Gaudin model associated with the Lie algebra $\g \coloneqq \mathfrak{gl}_T(\mathbb{C})$ and the automorphism $\sgm \in \text{Aut}\, \g$ defined by $\sgm(E_{ij}) = \omg^{j-i}E_{ij}$, for every $i, j = 1, \dots, T$. Here $\omg$ is a primitive $T$-th root of unity, and by $E_{ij}$, $i, j = 1, \dots, T$, we denote the standard basis of $\mathfrak{gl}_T(\mathbb{C})$ taking the indices $i$ and $j$ modulo $T$ by convention. The eigenspaces of $\sgm$ defined by \eqref{eq:sgmeigenspace} are then given by $\g^{(n)} = \text{span}\{E_{i, i+n}\}_{i=1}^T$.

Let us also note the following useful identity that we shall frequently make use of
\begin{equation}\label{eq:omgidentity}
 \frac{z_1^{T-1-[l]} z_2^{[l]}}{z_1^T - z_2^T} = \frac{1}{T} \sum_{k=0}^{T-1} \frac{\omg^{-kl}}{z_1 - \omg^k z_2}
\end{equation}
for any $z_1, z_2 \in \mathbb{C}$ and $l \in \mathbb{Z}$, where $[l] \in \{ 0, \dots, T-1\}$ is such that $l = [l] \mod T$.

\subsection{Periodic Toda chain}\label{sec:toda}

The periodic Toda chain \cite{T} describes a system of particles connected by ``exponential springs'' together with a periodic boundary condition, and has been extensively studied in the Hamiltonian formalism. See, for instance, \cite{F} for the widely used Flaschka change of coordinates, and~\cite{AM} for a proof of its integrability.

{\bf Lax matrix.} We will work with the Lax matrix
\begin{equation}
\label{Lax_Toda}
 L_{\text{Toda}}(\lda) =
 \begin{pmatrix}
		 p_1 \lda^{-1} & 1 & 0 & & \dots & & a_T \lda^{-2}\\
		a_1 \lda^{-2} & p_2 \lda^{-1} & 1 & & \dots & & 0 &\\[1ex]
		\vdots & & \ddots & & & &\vdots\\
		0 & & a_{i-1} \lda^{-2} & p_i \lda^{-1} & 1 & & 0\\
 \vdots & & & & \ddots & &\vdots\\
 0 & & \dots & & a_{T-2} \lda^{-2} & p_{T-1} \lda^{-1} & 1\\
		1 & & \dots & & 0 & a_{T-1} \lda^{-2} & p_T \lda^{-1}
	\end{pmatrix},
\end{equation}
where $a_i = {\rm e}^{q_i - q_{i+1}}$, and $q_i$, $p_i$ are the canonical coordinates satisfying the canonical Poisson bracket relations $\{p_i, q_j\} = \delta_{ij}$ for $i, j = 1, \dots, T$. We also have the periodic boundary conditions $(p_0, q_0) = (p_T, q_T)$ and $(p_{T+1}, q_{T+1}) = (p_1, q_1)$.

The standard Lax matrix for the periodic Toda chain (see \cite[Section~6]{BBT}, for instance)
\begin{equation*}
 \widetilde{L}_{\text{Toda}}(\lda) =
 \begin{pmatrix}
		 p_1 & a_1^{1/2} & 0 & & \dots & & a_T^{1/2} \lda^{-1}\\
		a_1^{1/2} & p_2 & a_2^{1/2} & & \dots & & 0 &\\[1ex]
		\vdots & & \ddots & & & &\vdots\\
		0 & & a_{i-1}^{1/2} & p_i & a_i^{1/2} & & 0\\
 \vdots & & & & \ddots & &\vdots\\
 0 & & \dots & & a_{T-2}^{1/2} & p_{T-1} & a_{T-1}^{1/2}\\
		a_T^{1/2} \lda & & \dots & & 0 & a_{T-1}^{1/2} & p_T \\
	\end{pmatrix}
\end{equation*}
is related to $L_{\text{Toda}}(\lda)$ by conjugation by the diagonal matrix $\mathcal{Q}\! =\! {\rm diag}\bigl({\rm e}^{-q_1/2}\lda^{-1}, \dots, {\rm e}^{-q_T/2}\lda^{-T}\bigr)$ and multiplication by an overall factor of $\lda^{-1}$, together with a change of $\lda$-dependence, as follows:%
\begin{equation}\label{eq:todagaugetransform}
 L_{\text{Toda}}(\lda) = \lda^{-1} \mathcal{Q} \widetilde{L}_{\text{Toda}}\bigl(\lda^T\bigr) \mathcal{Q}^{-1} .
\end{equation}

The Poisson bracket of the Lax matrix $\widetilde{L}_{\text{Toda}}(\lda)$ can be written as
\begin{equation*}
 \bigl\{ \widetilde{L}_{\text{Toda}\, \ti{1}}(\lda), \widetilde{L}_{\text{Toda}\, \ti{2}}(\mu) \bigr\} = \bigl[ \widetilde{r}_{\ti{12}}(\lda, \mu), \widetilde{L}_{\text{Toda}\, \ti{1}}(\lda) + \widetilde{L}_{\text{Toda}\, \ti{2}}(\mu) \bigr],
\end{equation*}
where $\widetilde{r}_{\ti{12}}(\lda, \mu)$ is the skew-symmetric $r$-matrix
\begin{equation*}
 \widetilde{r}_{\ti{12}}(\lda, \mu) = \frac{1}{2} \frac{\mu + \lda}{\mu - \lda} \sum_{i = 1}^T E_{ii} \otimes E_{ii} + \frac{1}{\mu - \lda} \biggl( \mu \sum_{i < j} + \lda \sum_{i > j} \biggr) E_{ij} \otimes E_{ji} .
\end{equation*}
Under the gauge transformation \eqref{eq:todagaugetransform}, on using the identity \eqref{eq:omgidentity}, we find that the Lax matrix $L_{\text{Toda}}(\lda)$ satisfies the Poisson bracket
\begin{equation*}
 \bigl\{L_{\text{Toda}\, \ti{1}}(\lda), L_{\text{Toda}\, \ti{2}}(\mu) \bigr\} = \bigl[r_{\ti{12}}(\lda, \mu), L_{\text{Toda}\, \ti{1}}(\lda) \bigr] - \bigl[r_{\ti{21}}(\mu, \lda), L_{\text{Toda}\, \ti{2}}(\mu)\bigr],
\end{equation*}
where $r_{\ti{12}}(\lambda, \mu)$ is the non-skew-symmetric cyclotomic $r$-matrix
\begin{equation}\label{eq:nss-rmateij}
 r_{\ti{12}}(\lda, \mu) = \frac{1}{T} \sum_{k=0}^{T-1} \frac{\omg^{k(j-i)}}{\mu - \omg^{-k}\lda} E_{ij} \otimes E_{ji} \,
\end{equation}
that we have been working with. This explains our choice of Lax matrix \eqref{Lax_Toda} as opposed to the more traditional one. As just proved, it satisfies the Poisson algebra of the cyclotomic $\mathfrak{gl}_T(\mathbb{C})$-Gaudin model and therefore allows us to obtain the periodic Toda chain as a certain realisation of that model. Specifically, the Lax matrix $L_{\text{Toda}}(\lda)$ can be seen as a realisation of the cyclotomic Gaudin Lax matrix with double poles at the origin and at infinity, that is,
\begin{equation}\label{eq:todalax}
 L_{\text{Toda}}(\lda) = \frac{J_0^{(0)}}{\lda} + \frac{J_0^{(1)}}{\lda^2} + J_\infty,
\end{equation}
where
\begin{equation*}
 J_0^{(0)} = \sum_{i=1}^T p_i E_{ii} , \qquad J_0^{(1)} = \sum_{i=1}^T {\rm e}^{q_i - q_{i+1}} E_{i+1, i} , \qquad J_\infty = \sum_{i=1}^T E_{i, i+1} .
\end{equation*}

{\bf Orbit realisation.} Set $D = \{0, \infty\}$, and choose the non-dynamical element
\begin{equation*}
 \Lambda_{\text{Toda}}(\lda)= \frac{\Lambda_0^{(1)}}{\lambda^2} + \Lambda_\infty \in \Omega_{D^\prime}^\Gamma,
\end{equation*}
where
\begin{equation*}
\Lambda_0^{(1)} = \sum_{i=1}^T E_{i+1, i} \in \g^{(-1)} , \qquad \Lambda_\infty = \sum_{i=1}^T E_{i, i+1} \in \g^{(1)} .
\end{equation*}
The group elements $\varphi_+ =\left( \varphi_{0+}, \varphi_{\infty+} \right)$, defined by \eqref{eq:groupelement}, contain the dynamical degrees of freedom. From \eqref{eq:cgcoadorb}, we know that the components of $L_{\text{Toda}}(\lda)$ can now be expressed as
\begin{equation}\label{eq:todacoadorb}
 J_0^{(0)} = \bigl[\phi_0^{(1)} \phi_0^{(0)\, -1}, J_0^{(1)}\bigr] , \qquad J_0^{(1)} = \phi_0^{(0)} \Lambda_0^{(1)} \phi_0^{(0)\, -1} , \qquad J_\infty = \Lambda_\infty .
\end{equation}
This gives us a parametrisation of the Lax matrix $L_{\text{Toda}}(\lda)$ in \eqref{eq:todalax} as an element of the coadjoint orbit ${\cal O}_\Lambda^{\text{Toda}}$. Since $\sgm (\varphi_{0+}(\lda)) = \varphi_{0+}(\omg\lda)$, we have \smash{$\sgm \phi_0^{(n)} = \omg^n \phi_0^{(n)}$}. In particular, \smash{$\phi_0^{(0)}$} and \smash{$\phi_0^{(1)}$} have the form
\begin{equation*}
 \phi_0^{(0)} = \sum_{i=1}^T u_i E_{ii} , \qquad \phi_0^{(1)} = \sum_{i=1}^T v_i E_{i, i+1} .
\end{equation*}
For convenience, define $(u_0, v_0) = (u_T, v_T)$ and $(u_{T+1}, v_{T+1}) = (u_1, v_1)$ to encode the periodic boundary conditions. Then, from \eqref{eq:todacoadorb}, we get
\begin{equation*}
 J_0^{(0)} = \sum_{i=1}^T \biggl(\frac{v_i}{u_i} - \frac{v_{i-1}}{u_{i-1}} \biggr) E_{ii} , \qquad J_0^{(1)} = \sum_{i=1}^T \dfrac{u_{i+1}}{u_i} E_{i+1, i} , \qquad J_\infty = \sum_{i=1}^T E_{i, i+1} .
\end{equation*}
Defining
\begin{equation*}
 p_i = \frac{v_i}{u_i} - \frac{v_{i-1}}{u_{i-1}} \qquad \text{and} \qquad q_i = -\ln{u_i} \qquad \text{for}\quad i = 1, \dots, T ,
\end{equation*}
we now have the desired realisation of the coefficients of the Lax matrix $L_{\text{Toda}}(\lda)$. The coadjoint orbit ${\cal O}_\Lambda^{\text{Toda}}$ where $L_{\text{Toda}}(\lda)$ lives is parameterised by $a_i$, $p_i$, $i = 1, \dots, T$, satisfying $\prod_i^T a_i = 1$ and $\sum_i^T p_i = 0$.

{\bf Lax equations.} Let us now look at the Lax equations associated with the invariant functions on \smash{$\Lg_D^{(1)}$} defined in \eqref{eq:cgpot} for the general case. The only functions we need to consider are
\begin{equation*}
 H_{p, 0} = \dfrac{1}{p+1} \Res_{\lda = 0} \bigl(\lda^p \Tr\bigl(\iota_{\lda_0}L_{\text{Toda}}^{p+1}\bigr)\bigr){\rm d}\lda .
\end{equation*}
The set of functions $H_{p, \infty}$ are not independent: we have $H_{p, \infty} = - H_{p, 0}$ for all $p \geq 1$. The Lax equations for the periodic Toda chain with respect to the elementary times $t_p^0$ are given by
\begin{equation*}
 \partial_{t_p^0} \iota_\lda L_{\text{Toda}} = [R_\pm \nabla H_{p, 0}(\iota_\lda L_{\text{Toda}}), \iota_\lda L_{\text{Toda}}] .
\end{equation*}
For $p=1$, these take the form of the set of Lax equations in \eqref{eq:laxeqfirstflow}:
\begin{equation}\label{eq:todalaxeq}
 \partial_{t_1^0} L_{\text{Toda}} = [M_{1, 0}, L_{\text{Toda}}] \qquad \text{with} \quad M_{1, 0} = - \dfrac{J_0^{(1)}}{\lda} .
\end{equation}
Taking residues on both sides in \eqref{eq:todalaxeq} gives the equations of motion for $p_i$, $q_i$ for $i = 1, \dots, T$. To get the equations of motion for $q_i$ it is most convenient to ``undo'' the dressing and write the corresponding equation as
\begin{equation*}
\bigl[\partial_{t_1^0} \phi_0^{(0)} \phi_0^{(0)\, -1}-J_0^{(0)}, J_0^{(1)}\bigr]=0 .
\end{equation*}
This tells us that the diagonal matrix
\[
\phi_0^{(0)\, -1}\bigl(\partial_{t_1^0} \phi_0^{(0)} \phi_0^{(0)\, -1}-J_0^{(0)}\bigr)\phi_0^{(0)}
\]
 must commute with~$\Lambda_0^{(1)}$, and is therefore equal to $\alpha\1$. Using the freedom \smash{$\phi_0^{(0)}\to\phi_0^{(0)} g$}, where $g$ is~diagonal, to set \smash{$\det\phi_0^{(0)}=1$}, we see that $\alpha=0$ by taking the trace of \[
 \partial_{t_1^0} \phi_0^{(0)} \phi_0^{(0)\, -1}-J_0^{(0)}=\alpha\1.
 \] Thus, we have the desired Toda equations
\begin{align}
 &\partial_{t_1^0} q_i = - p_i , \nonumber\\
&\partial_{t_1^0} p_i = {\rm e}^{q_i - q_{i+1}} - {\rm e}^{q_{i-1} - q_i} .\label{eq:todaeom}
\end{align}

{\bf Lagrangian description.} Using Theorem \ref{th:cgmultiform}, we can now write a Lagrangian multiform for the periodic Toda hierarchy as
\begin{equation*}
 \Lag_{\text{Toda}} = \sum_{p=1}^{M} \Lag_{p, 0}\, \mathrm{d}t_p^0
\end{equation*}
with
\begin{equation*}
 \Lag_{p, 0} = - \sum_{i=1}^T p_i \partial_{t_p^0} q_i - \dfrac{1}{p+1} \Res_{\lda = 0} \bigl(\lda^p \Tr\bigl(L_{\text{Toda}}^{p+1}\bigr)\bigr){\rm d}\lda .
\end{equation*}

For $p=1$, this gives us the Lagrangian coefficients for the periodic Toda chain with respect to the time $t_1^0$:
\begin{equation*}
 \Lag_{1, 0} = - \sum_{i=1}^T p_i \partial_{t_1^0} q_i - \frac{1}{2} \sum_{i=1}^T p_i^2 - \sum_{i=1}^T {\rm e}^{q_i - q_{i+1}} .
\end{equation*}
This is the expected {\it phase space} Lagrangian that our method produces and which gives Hamilton's equations \eqref{eq:todaeom} for periodic Toda. It corresponds to the (tangent bundle) Lagrangian
\[
\Lag_{1, 0} = \frac{1}{2}\sum_{i=1}^T \bigl(\partial_{t_1^0} q_i\bigr)^2 - \sum_{i=1}^T {\rm e}^{q_i - q_{i+1}}.
\]

\subsection{DST model}\label{sec:dst}

The discrete self-trapping (DST) equation was introduced in \cite{ELS} to describe the dynamics of~small molecules, which then led to detailed studies of the DST dimer using different methods. The DST model we cast into our framework here is a generalisation of the dimer case to arbitrary (finite) degrees of freedom. This general case first appeared in \cite{CJK} where its relationship with the periodic Toda chain was also hinted at. Our motivation here being different, we do not delve into this connection between the two theories. The interested reader is referred to \cite{KSS} where this aspect was explored further.

{\bf Lax matrix.} We work here with the following avatar of the Lax matrix of the DST model
\begin{equation*}
 L_{{\rm DST}}(\lda) = \dfrac{1}{\lda}\sum_{i=1}^T c_i E_{ii} + \frac{1}{T} \sum_{i, j = 1}^T \sum_{k=0}^{T-1} \frac{\omg^{k(j-i)}x_i X_j E_{ij}}{\lda - \omg^k \zeta_1} + \sum_{i=1}^T E_{i, i+1},
\end{equation*}
where $c_i$, for $i = 1, \dots, T$, are complex parameters, and $x_i$, $X_i$ are the canonical coordinates satisfying the canonical Poisson bracket relations $\{X_i, x_j\} = \delta_{ij}$ for $i, j = 1, \dots, T$, and the periodic conditions $(X_0, x_0) = (X_T, x_T)$ and $(X_{T+1}, x_{T+1}) = (X_1, x_1)$.

The DST Lax matrix in \cite[equation (3.8)]{KSS} is given as
\begin{equation*}
 \widehat{L}_{{\rm DST}}(\mu) = \sum_{i, j =1}^T \frac{b^{T+i-j} x_i X_j E_{ij}}{\mu - b^T} + \mu E_{T1} + \sum_{i \geq j} b^{i-j}x_i X_j E_{ij} + \sum_{i=1}^T c_i E_{ii} + \sum_{i=1}^{T-1} E_{i, i+1},
\end{equation*}
where $b, c_i \in \mathbb{C}$ are parameters of the model, and $\mu$ is the spectral parameter of the Lax matrix. In the present setup where we realise the DST model as a cyclotomic Gaudin model, the role of the parameter $b$ is played by the location of the pole $\zeta_1$ on $\mathbb{C}P^1 \setminus \{0, \infty\}$.
One obtains the Lax matrix $L_{{\rm DST}}(\lda)$ by conjugating $\widehat{L}_{{\rm DST}}\bigl(\lda^T\bigr)$ by the diagonal matrix $\mathcal{D} = {\rm diag}\bigl(\lda^{-1}, \dots, \lda^{-T}\bigr)$ followed by an overall multiplication by $\lda^{-1}$, together with a change of $\lda$-dependence, that is,{\samepage
\begin{equation}\label{eq:dstgaugetransform}
 L_{{\rm DST}}(\lda) = \lda^{-1} \mathcal{D} \widehat{L}_{{\rm DST}}\bigl(\lda^T\bigr) \mathcal{D}^{-1} ,
\end{equation}
and then using the identity \eqref{eq:omgidentity}.}

The Poisson bracket of the Lax matrix $\widehat{L}_{{\rm DST}}(\lda)$ is given as
\begin{equation*}
 \bigl\{\widehat{L}_{{\rm DST}\, \ti{1}}(\lda), \widehat{L}_{{\rm DST}\, \ti{2}}(\mu) \bigr\} = \bigl[\widehat{r}_{\ti{12}}(\lda, \mu), \widehat{L}_{{\rm DST}\, \ti{1}}(\lda)\bigr] - \bigl[\widehat{r}_{\ti{21}}(\mu, \lda), \widehat{L}_{{\rm DST}\, \ti{2}}(\mu)\bigr],
\end{equation*}
where
\begin{equation*}
 \widehat{r}_{\ti{12}}(\lda, \mu) = \frac{1}{\mu - \lda} \biggl( \mu \sum_{i \leq j} + \lda \sum_{i > j} \biggr) E_{ij} \otimes E_{ji} .
\end{equation*}
Using the identity \eqref{eq:omgidentity} once again, one finds that the gauge transformation \eqref{eq:dstgaugetransform} gives for $L_{{\rm DST}}(\lda)$ the $r$-matrix \eqref{eq:nss-rmateij} associated with the cyclotomic $\mathfrak{gl}_T(\mathbb{C})$-Gaudin model, as we would have anticipated. We then have the Poisson bracket
\begin{equation*}
 \bigl\{L_{{\rm DST}\, \ti{1}}(\lda), L_{{\rm DST}\, \ti{2}}(\mu) \bigr\} = \bigl[r_{\ti{12}}(\lda, \mu), L_{{\rm DST}\, \ti{1}}(\lda) \bigr] - \bigl[r_{\ti{21}}(\mu, \lda), L_{{\rm DST}\, \ti{2}}(\mu)\bigr] .
\end{equation*}
Similar to the case of the periodic Toda chain, the Lax matrix $L_{{\rm DST}}(\lda)$ can be seen as a~realisation of the cyclotomic Gaudin Lax matrix, this time with simple poles at the origin and all~$\omg^k \zeta_1$, $k \in \{0, \dots, T-1\}$, for some $\zeta_1 \in \mathbb{C}^{\times}$, and a double pole at infinity, that is,
\begin{equation}\label{eq:dstlax}
 L_{{\rm DST}}(\lda) = \frac{K_0^{(0)}}{\lda} + \frac{1}{T} \sum_{k=0}^{T-1} \frac{ \sgm^k K_1 }{\lda - \omg^k \zeta_1} + K_\infty,
\end{equation}
where
\begin{equation}\label{eq:dstcoeff}
 K_0^{(0)} = \sum_{i=1}^T c_i E_{ii} , \qquad K_1 = \sum_{i, j = 1}^T x_i X_j E_{ij} , \qquad K_\infty = \sum_{i=1}^T E_{i, i+1} .
\end{equation}

{\bf Orbit realisation.} Set $D = \{0, 1, \infty\}$, and choose the non-dynamical element
\begin{equation*}
 \Lambda_{{\rm DST}}(\lda) = \frac{\Lambda_0^{(0)}}{\lambda} + \frac{1}{T} \sum_{k=0}^{T-1} \frac{\sgm^k \Lambda_1}{\lda - \omg^k \zeta_1} + \Lambda_\infty \in \Omega_{D^\prime}^\Gamma,
\end{equation*}
where
\begin{equation*}
\Lambda_0^{(0)} = \sum_{i=1}^T c_i E_{ii} \in \g^{(0)} , \qquad \Lambda_1 = E_{11} \in \g , \qquad \Lambda_\infty = \sum_{i=1}^T E_{i, i+1} \in \g^{(1)} .
\end{equation*}
The coadjoint action of the group elements $\varphi_+ =\left( \varphi_{0+}, \varphi_{1+}, \varphi_{\infty+} \right)$, defined by \eqref{eq:groupelement}, on $\Lambda_{{\rm DST}}(\lda)$ gives the orbit where $L_{{\rm DST}}(\lda)$ lives. From \eqref{eq:cgcoadorb}, we know that the components of $L_{{\rm DST}}(\lda)$ can then be expressed as
\begin{align}\label{eq:dstcoadorb}
 &K_0^{(0)} = \phi_0^{(0)} \Lambda_0^{(0)} \phi_0^{(0)\, -1} , \qquad K_1 = \phi_1 \Lambda_1 \phi_1^{-1} , \qquad K_\infty = \Lambda_\infty ,
\end{align}
where we have denoted \smash{$\phi_1^{(0)}$} by $\phi_1$ for simplicity.
This gives us a parameterisation of the Lax matrix $L_{{\rm DST}}(\lda)$ in \eqref{eq:dsttodalax} as an element of the coadjoint orbit ${\cal O}_\Lambda^{{\rm DST}}$. We parameterise $\phi_1$ as
\begin{equation*}
 \phi_1 = \sum_{i,j=1}^T s_{ij} E_{ij} ,
\end{equation*}
and denote by $\hat{s}_{ij}$, $i, j = 1, \dots, T$, the entries of its inverse, that is,
\begin{equation*}
 \phi_1^{-1} = \sum_{i,j=1}^T \hat{s}_{ij} E_{ij} .
\end{equation*}
From \eqref{eq:dstcoadorb}, we then have
\begin{equation*}
 K_0^{(0)} = \sum_{i=1}^T c_i E_{ii} , \qquad K_1 = \sum_{i,j=1}^T s_{i1} \hat{s}_{1j} E_{ij} , \qquad K_\infty = \sum_{i=1}^T E_{i, i+1} .
\end{equation*}
Defining
\begin{equation*}
 x_i = s_{i1} \qquad \text{and} \qquad X_i = \hat{s}_{1i} \qquad \text{for}\quad i = 1, \dots, T ,
\end{equation*}
gives us the desired realisation of the Lax matrix $L_{{\rm DST}}(\lda)$ as an element of the coadjoint orbit~${\cal O}_\Lambda^{{\rm DST}}$. Notice that $\Tr K_1 = \sum_{i=1}^T x_i X_i = 1$ is an orbit invariant and can be seen as being generated by the symmetry $x_i \rightarrow a x_i$, $X_i \rightarrow a^{-1} X_i$.

{\bf Lax equations.} Let us choose invariant functions $H_{p, r}$ on \smash{$\Lg_D^{(1)} $} as defined in \eqref{eq:cgpot}, and treat~$H_{p, 0}$ and~$H_{p, 1}$ as the independent functions. The Lax equations for the DST model with respect to the elementary times $t_p^r$ are then given by
\begin{equation*}
 \partial_{t_p^r} \iota_\lda L_{{\rm DST}} = [R_\pm \nabla H_{p, r}(\iota_\lda L_{{\rm DST}}), \iota_\lda L_{{\rm DST}}] .
\end{equation*}
For $p=1$, these take the form of the set of Lax equations in \eqref{eq:laxeqfirstflow}:
\begin{equation*}
 \partial_{t_1^r} L_{{\rm DST}} = [M_{1, r}, L_{{\rm DST}}] \qquad \text{with} \quad M_{1, r} =
 \begin{dcases}
 \quad 0 & \text{for}\ r = 0,\\
 - \dfrac{1}{T} \sum_{k=0}^{T-1} \dfrac{\omg^k \zeta_1 \sgm^k K_1}{\lda - \omg^k \zeta_1} & \text{for}\ r = 1 .
 \end{dcases}
\end{equation*}
The $t_1^0$ equations are trivial. Similar to the case of the periodic Toda chain, the easiest way to get the $t_1^1$ equations of motion for $x_i$ and $X_i$, for $i = 1, \dots, T$, is to undo the dressing. With~\smash{${M=K_0^{(0)}+\zeta_1K_\infty+\frac{1}{T} \sum_{k=1}^{T-1} \sgm^k K_1}$}, we find that \smash{$ \phi_1^{-1}(\partial_{t_1^1}\phi_1 \phi_1^{-1}-M)\phi_1$} must commute with $\Lambda_1$, and hence be block diagonal with a scalar ``block'', say $\rho$, and a $(T-1)\times (T-1)$ block which is irrelevant. Then we find, collecting $x_i$ in the vector ${\bf x}$, and $X_i$ in the vector ${\bf X}$,
\begin{equation*}
\partial_{t_1^1}{\bf x}-M{\bf x}=\rho {\bf x}, \qquad \partial_{t_1^1}{\bf X}^T-{\bf X}^TM=\rho {\bf X}^T .
\end{equation*}
Using the freedom $x_i \rightarrow a x_i$, $X_i \rightarrow a^{-1} X_i$ mentioned above with $a={\rm e}^B$, $\partial_{t_1^1}B=\rho$, we can set~${\rho=0}$. Explicitly, we obtain
\begin{gather}
 \partial_{t_1^0} X_i = 0 , \nonumber\\
 \partial_{t_1^0} x_i = 0 , \nonumber\\
 \partial_{t_1^1} X_i = - c_i X_i - \zeta_1 X_{i-1} - \dfrac{1}{T} \sum\limits_{k=1}^{T-1} \sum\limits_{j=1}^T \omg^{k(j-i)}x_j X_j X_i , \nonumber\\
 \partial_{t_1^1} x_i = c_i x_i + \zeta_1 x_{i+1} + \dfrac{1}{T} \sum\limits_{k=1}^{T-1} \sum\limits_{j=1}^T \omg^{k(j-i)} X_j x_j x_i .\label{eq:dsteom}
\end{gather}

{\bf Lagrangian description.} We can now write a Lagrangian multiform for the DST hierarchy using Theorem \ref{th:cgmultiform} as
\begin{equation*}
 \Lag_{{\rm DST}} = \sum_{p=1}^M \Lag_{p, 0}\, \mathrm{d}t_p^0 + \sum_{p=1}^M \Lag_{p, 1}\, \mathrm{d}t_p^1
\end{equation*}
with
\begin{equation*}
 \Lag_{p, r} = \sum_{i=1}^T X_i \partial_{t_p^r} x_i - H_{p, r}(\iota_\lda L_{{\rm DST}}),\qquad r \in \{0, 1\} ,
\end{equation*}
where the potential term $H_{p, r}(\iota_\lda L_{{\rm DST}})$ is given by \eqref{eq:cgpot} for $r \in \{0, 1\}$. Notice that we have dropped the kinetic contribution to $\Lag_{p, r}$ from the pole at origin since being a total derivative it will not enter the Euler--Lagrange equations.
For $p=1$, the Lagrangian coefficients explicitly~read
\begin{align*}
 &\Lag_{1, 0} = \sum_{i=1}^T X_i \partial_{t_1^0} x_i - \dfrac{1}{2} \sum_{i=1}^T c_i^2 ,\\
 &\Lag_{1, 1} = \sum_{i=1}^T X_i \partial_{t_1^1} x_i - \dfrac{1}{2T} \sum_{i,j=1}^T \sum_{k=0}^{T-1} \omg^{k(j-i)} x_i x_j X_i X_j - \sum_{i=1}^T c_i x_i X_i - \zeta_1 \sum_{i=1}^T x_{i+1} X_i .
 \end{align*}
It can be checked that the Euler--Lagrange equations obtained from varying $\Lag_{1, 0}$ and $\Lag_{1, 1}$ with respect to $X_i$, $x_i$ are exactly the equations in \eqref{eq:dsteom}.

\subsection{Coupled Toda-DST system}\label{sec:todadst}

Finally, we turn to the task of coupling together the two hierarchies we have described variationally in this section. The Lax matrix of the coupled hierarchy can be expressed as
\begin{equation}\label{eq:dsttodalax}
 L_{\text{Toda-DST}}(\lda) = L_{\text{Toda}}(\lda) + \beta L_{{\rm DST}}(\lda) ,
\end{equation}
where $L_{\text{Toda}}(\lda)$ is the Lax matrix of the periodic Toda chain in \eqref{eq:todalax}, $L_{{\rm DST}}(\lda)$ is the DST Lax matrix \eqref{eq:dstlax}, and the parameter $\beta$ is a real-valued scalar parameter dictating the strength of coupling between the two hierarchies. Naturally, the Lax matrix $L_{\text{Toda-DST}}(\lda)$ can be seen as a~realisation of the cyclotomic Gaudin Lax matrix:
\begin{equation*}
 L_{\text{Toda-DST}}(\lda) = \frac{J_0^{(0)} + \beta K_0^{(0)}}{\lda} + \frac{J_0^{(1)}}{\lda^2} + \frac{\beta}{T} \sum_{k=0}^{T-1} \frac{ \sgm^k K_1 }{\lda - \omg^k \zeta_1} + J_\infty + \beta K_\infty
\end{equation*}
with the $\g$-valued coefficients defined in \eqref{eq:todacoadorb} and \eqref{eq:dstcoeff}.

As the number of finite (non-zero) poles in a cyclotomic Gaudin model is arbitrary, our construction allows us, in principle, to couple together an arbitrary number of copies of the DST model and a copy of the periodic Toda chain. Here we only illustrate it for a single copy each of the periodic Toda chain and the DST model.

{\bf Orbit realisation.} Set $D = \{0, 1, \infty\}$. Since we already have a parameterisation of the Lax matrices $L_{\text{Toda}}(\lda)$ and $L_{{\rm DST}}(\lda)$ as orbit elements, we only need to check that the action of a~generic group element \smash{$\varphi_+ =( \varphi_{0+}, \varphi_{1+}, \varphi_{\infty+}) \in \LG_{D+}^{(0)}$} defined by \eqref{eq:groupelement} on the non-dynamical element
\begin{equation*}
 \Lambda_{\text{Toda-DST}}(\lda) = \Lambda_{\text{Toda}}(\lda) + \beta \Lambda_{{\rm DST}}(\lda) \in \Omega_{D^\prime}^\Gamma,
\end{equation*}
where
\begin{equation*}
 \Lambda_{\text{Toda}}(\lda)= \frac{\Lambda_0^{(1)}}{\lambda^2} + \Lambda_\infty , \qquad \Lambda_{{\rm DST}}(\lda) = \frac{\Lambda_0^{(0)}}{\lambda} + \frac{1}{T} \sum_{k=0}^{T-1} \frac{\sgm^k \Lambda_1}{\lda - \omg^k \zeta_1} + \Lambda_\infty
\end{equation*}
results in the Lax matrix $L_{\text{Toda-DST}}(\lda)$. Indeed, using the result in \eqref{eq:cgcoadorb}, we have that the components of $L_{\text{Toda-DST}}(\lda)$ take the form
\begin{align*}
 &J_0^{(0)} + \beta K_0^{(0)} = \bigl[\phi_0^{(1)} \phi_0^{(0)\, -1}, J_0^{(1)}\bigr] + \beta \phi_0^{(0)} \Lambda_0^{(0)} \phi_0^{(0)\, -1} ,\nonumber\\
 &J_0^{(1)} = \phi_0^{(0)} \Lambda_0^{(1)} \phi_0^{(0)\, -1} ,\nonumber\\
 &\beta K_1 = \beta \phi_1 \Lambda_1 \phi_1^{-1} , \qquad \text{and} \nonumber\\
 &J_\infty + \beta K_\infty = \Lambda_\infty + \beta \Lambda_\infty ,
\end{align*}
as desired. This gives us a parameterisation of the Lax matrix $L_{\text{Toda-DST}}(\lda)$ in \eqref{eq:dsttodalax} as an element of the coadjoint orbit ${\cal O}_\Lambda^{\text{Toda}} \times {\cal O}_\Lambda^{{\rm DST}}$.

{\bf Lax equations.} As earlier, we will choose invariant functions $H_{p, r}$ on \smash{$\Lg_D^{(1)} $} as defined in~\eqref{eq:cgpot}, for $r \in \{0, 1\}$. The Lax equations for the DST-Toda model with respect to the elementary times~$t_p^r$ are given by
\begin{equation*}
 \partial_{t_p^r} \iota_\lda L_{\text{Toda-DST}} = [R_\pm \nabla H_{p, r}(\iota_\lda L_{\text{Toda-DST}}), \iota_\lda L_{\text{Toda-DST}}] .
\end{equation*}
For $p=1$, these take the form of the set of Lax equations in \eqref{eq:laxeqfirstflow}:
\begin{equation}\label{eq:todadstlaxeq}
 \partial_{t_1^r} L_{\text{Toda-DST}} = [M_{1, r}, L_{\text{Toda-DST}}]
\end{equation}
with
\begin{equation*}
 M_{1, r} =
 \begin{dcases}
 - \dfrac{J_0^{(1)}}{\lda} &\text{for}\ r = 0,\\
 - \dfrac{\beta}{T} \sum_{k=0}^{T-1} \dfrac{\omg^k \zeta_1 \sgm^k K_1}{\lda - \omg^k \zeta_1} & \text{for}\ r = 1 .
 \end{dcases}
\end{equation*}
Taking residues on both sides in \eqref{eq:todadstlaxeq} gives the equations of motion for $p_i$, $q_i$, $X_i$, $x_i$, for ${i = 1, \dots, T}$. Like in the cases of the periodic Toda chain and the DST model, to get the equations of motion for $q_i$, $X_i$, $x_i$, one can ``undo'' the dressing in the corresponding equations. Explicitly, one gets the following equations for $p_i$, $q_i$ for $i = 1, \dots, T$:
\begin{gather}
 \partial_{t_1^0} p_i = (1 + \beta)\bigl({\rm e}^{q_i - q_{i+1}} - {\rm e}^{q_{i-1} - q_i}\bigr) + \dfrac{\beta}{\zeta_1} \bigl({\rm e}^{q_{i-1} - q_i} x_{i-1} X_i - {\rm e}^{q_i - q_{i+1}} x_i X_{i+1}\bigr) , \nonumber\\
 \partial_{t_1^0} q_i = - p_i - \beta c_i , \nonumber\\
 \partial_{t_1^1} p_i = \dfrac{\beta}{\zeta_1} \bigl({\rm e}^{q_i - q_{i+1}} x_i X_{i+1} - {\rm e}^{q_{i-1} - q_i} x_{i-1} X_i\bigr) , \nonumber\\
 \partial_{t_1^1} q_i = - \beta x_i X_i ,\label{eq:todaeombeta}
\end{gather}
and the following equations for $X_i$, $x_i$, for $i = 1, \dots, T$:
\begin{gather}
 \partial_{t_1^0} X_i = \dfrac{1}{\zeta_1} {\rm e}^{q_i - q_{i+1}} X_{i+1} , \nonumber\\
 \partial_{t_1^0} x_i = - \dfrac{1}{\zeta_1} {\rm e}^{q_{i-1} - q_i}x_{i-1} , \nonumber\\
 \partial_{t_1^1} X_i = - p_i X_i - \beta c_i X_i - \dfrac{1}{\zeta_1} {\rm e}^{q_i - q_{i+1}} X_{i+1} - \dfrac{\beta}{T} \sum\limits_{j=1}^T \sum\limits_{k=0}^{T-1} \omg^{k(j-i)} x_j X_j X_i - (1 + \beta) \zeta_1 X_{i-1} ,\nonumber\\
 \partial_{t_1^1} x_i = p_i x_i + \beta c_i x_i + \dfrac{1}{\zeta_1} {\rm e}^{q_{i-1} - q_i}x_{i-1} + \dfrac{\beta}{T} \sum\limits_{j=1}^T \sum\limits_{k=0}^{T-1} \omg^{k(j-i)} X_j x_j x_i + (1 + \beta) \zeta_1 x_{i+1} .\label{eq:dsteombeta}
\end{gather}

Setting $\beta = 0$ in the above equations produces the equations of motion in \eqref{eq:todaeom} for the periodic Toda chain. In the limit $\beta \rightarrow \infty$, these reduce to the equations of motion we obtained for the DST model in \eqref{eq:dsteom}. To see this, note that the Lax matrix $L_{{\rm DST}}$ comes as $\beta L_{{\rm DST}}$ in the coupled system. Therefore, in the limit $\beta \rightarrow \infty$, the time flow $t_p^r$ is rescaled such that $\partial_{t_p^r}$ rescales to $\beta^p \partial_{t_p^r}$.

{\bf Lagrangian description.} Using Theorem \ref{th:cgmultiform}, we can now write a Lagrangian multiform for the Toda-DST hierarchy:
\begin{equation*}
 \Lag_{\text{Toda-DST}} = \sum_{p=1}^M \Lag_{p, 0} \mathrm{d}t_p^0 + \sum_{p=1}^M \Lag_{p, 1} \mathrm{d}t_p^1
\end{equation*}
with
\begin{align*}
\Lag_{p, r} &{}= \Tr \bigl(J_0^{(0)} \partial_{t_p^r} \phi_0^{(0)} \phi_0^{(0)\, -1}\bigr) + \beta \Tr \bigl(K_0^{(0)} \partial_{t_p^r} \phi_0^{(0)} \phi_0^{(0)\, -1}\bigr) \notag \\
 &\quad{} + \beta \Tr \bigl(K_1 \partial_{t_p^r} \phi_1 \phi_1^{-1}\bigr) - H_{p, r}(\iota_\lda L_{\text{Toda-DST}}),\qquad r \in \{0, 1\} ,
\end{align*}
where the potential term $H_{p, r}(\iota_\lda L_{\text{Toda-DST}})$ is given by \eqref{eq:cgpot} for $r \in \{0, 1\}$. Notice that $K_0^{(0)} = \sum_{i=1}^T c_i E_{ii}$. So, the second term on the right-hand side is, in fact, a total derivative and will not enter the Euler--Lagrange equations. So, we will simply drop it. In terms of the canonical coordinates, we then have
\begin{equation}\label{eq:dtlagcoeff}
 \Lag_{p, r} = - \sum_{i=1}^T p_i \partial_{t_p^r} q_i + \beta \sum_{i=1}^T X_i \partial_{t_p^r} x_i - H_{p, r}(\iota_\lda L_{\text{Toda-DST}}),\qquad r \in \{0, 1\} .
\end{equation}
The decoupled limits of the periodic Toda and the DST hierarchies are easily recovered by setting $\beta = 0$ and taking the limit $\beta \rightarrow \infty$ in \eqref{eq:dtlagcoeff}, respectively.

For $p=1$, \eqref{eq:dtlagcoeff} gives the Lagrangian for the coupling of the periodic Toda chain and the~DST model with the potential terms
\begin{gather*}
 H_{1, 0} = \frac{1}{2} \sum_{i=1}^T p_i^2 + \beta \sum_{i=1}^T c_i p_i + \frac{\beta^2}{2} \sum_{i=1}^T c_i^2 + (1 + \beta) \sum_{i=1}^T {\rm e}^{q_i - q_{i+1}} - \frac{\beta}{\zeta_1} \sum_{i=1}^T {\rm e}^{q_i - q_{i+1}} x_i X_{i+1} , \\
 H_{1, 1} = \frac{\beta^2}{2T} \sum_{i,j=1}^T \sum_{k=0}^{T-1} \omg^{k(j-i)} x_i x_j X_i X_j + \beta \sum_{i=1}^T p_i x_i X_i + \beta^2 \sum_{i=1}^T c_i x_i X_i \\
 \hphantom{H_{1, 1} =}{}
 + \frac{\beta}{\zeta_1} \sum_{i=1}^T {\rm e}^{q_i - q_{i+1}} x_i X_{i+1} + \bigl(\beta + \beta^2\bigr) \zeta_1 \sum_{i=1}^T x_{i+1} X_i .
\end{gather*}

The variation $\delta \Lag_{1, r}$ gives the Euler--Lagrange equations for the coupled system with respect to the time flow $t_1^r$, for each $r \in \{0, 1\}$. It can be checked that these are exactly the equations we obtained in \eqref{eq:todaeombeta} and \eqref{eq:dsteombeta}.

\section{Conclusion}\label{sec:conclusion}

From the point of view of Lagrangian multiforms, our main result is the construction of a~Lagrangian multiform for the class of cyclotomic Gaudin models which includes, as a particular example, the emblematic periodic Toda chain. The main technical aspect of this result lies in the pinning down of the necessary algebraic setup (in Section~\ref{dialgebra_Gaudin}) associated with the classical $r$-matrix of the cyclotomic Gaudin model which is both non-skew-symmetric and depends on spectral parameters. This is the first time that a Lagrangian multiform has been constructed for an $r$-matrix with both of these properties, as previous general constructions were either only presented for skew-symmetric ones \cite{CDS, CSV} or for non-skew-symmetric $r$-matrices but without spectral parameter \cite{CDS}.

The non-skew-symmetric $r$-matrix of the cyclotomic Gaudin model arises from a decomposition of a direct sum of (twisted) loop algebras into a pair of complementary subalgebras, which controls the corresponding Lie dialgebra structure entering its Lagrangian multiform. While finishing this work, the preprint \cite{AMS} appeared on the construction of solutions of the general classical Yang--Baxter equation based on more general decompositions of such Lie algebras and to which one can associate generalised notions of Gaudin models. In particular, it appears that the cyclotomic setup considered here is a special case of their construction. It would be interesting to investigate if our Lagrangian multiform construction can be extended to the more general non-skew-symmetric $r$-matrices and associated generalised Gaudin models of \cite{AMS}.

\subsection*{Acknowledgements}

The authors would like to thank the referees for their constructive feedback. A.A.S. is funded by the School of Mathematics EPSRC Doctoral Training Partnership Studentship (Project Reference Number 2704447). B.V.\ gratefully acknowledges the support of the Leverhulme Trust through a Leverhulme Research Project Grant (RPG-2021-154).

\pdfbookmark[1]{References}{ref}
\LastPageEnding

\end{document}